\documentclass[a4paper,UKenglish,cleveref, autoref, thm-restate]{lipics-v2021-mod}
\pdfoutput=1 
\hideLIPIcs 
\nolinenumbers

\usepackage{fontawesome5}

\usepackage[normalem]{ulem}
\usepackage{svg}
\usepackage{xr}
\usepackage{graphicx}
\usepackage{dcolumn}
\usepackage{tabularx}
\usepackage{xcolor}
\usepackage{bm}
\usepackage{amsmath}
\usepackage{mathtools}
\usepackage{amssymb}
\usepackage{float}
\usepackage{algorithm}
\usepackage{algpseudocode}
\usepackage{tabularx}
\usepackage{pifont}
\usepackage{hyperref}
\newcommand{\cmark}{\ding{51}} 
\newcommand{\xmark}{\ding{55}} 
\usepackage{braket}
\usepackage{amsthm}
\usepackage{adjustbox}
\usepackage[table]{xcolor}
\usepackage{makecell} 
\usepackage{enumitem}

\hypersetup{breaklinks=true, colorlinks=true, linkcolor={blue!90!black}, citecolor={blue!90!black}, urlcolor={blue!90!black}}

\newcommand\Warning{\faExclamationTriangle}

\newcommand{\SK}[0]{\operatorname{SK}}
\newcommand{\PK}[0]{\operatorname{PK}}

\newcommand{\gen}[0]{\mathsf{Gen}}
\newcommand{\eval}[0]{\mathsf{Eval}}

\newcommand{\VRF}[0]{\operatorname{VRF}}
\newcommand{\VRFgen}[1]{\operatorname{VRF.Gen(#1)}}
\newcommand{\VRFeval}[1]{\operatorname{VRF.Eval(#1)}}
\newcommand{\VRFvfy}[1]{\operatorname{VRF.Vfy(#1)}}

\newcommand{\VRS}[0]{\mathsf{VRS}}
\newcommand{\vrs}[0]{\VRS}

\newcommand{\hash}[0]{\mathsf{Hash}}
\newcommand{\measure}[0]{\mathsf{Measure}}

\newcommand{\idealrandom}[0]{\mathsf{VRS}}
\newcommand{\idealblockchain}[0]{\mathsf{B3}_{\timebound, \bcentropybound}}

\newcommand{\idealseed}[0]{\mathsf{Seed}}

\newcommand{\blockheight}[0]{\mathsf{BID}}
\newcommand{\transactionhash}[0]{\mathsf{TXID}}

\newcommand{\aux}[0]{\mathsf{aux}}
\newcommand{\context}[0]{\mathsf{CTX}}

\newcommand{\ret}[0]{\mathbf{return}\ }
\newcommand{\ok}[0]{\mathsf{ok}}
\newcommand{\call}[0]{\mathsf{call}\ }

\newcommand{\timebound}[0]{\mathfrak{T}}
\newcommand{\bcentropybound}[0]{\mathfrak{E}}
\newcommand{\timeboundsec}[0]{\blockheight_\tau}
\newcommand{\timeboundtheo}[0]{\tau_{res}}

\newcommand{\simulator}[0]{\mathbf{\sigma}}
\newcommand{\hybrid}[0]{\mathbf{H}}

\newcommand{\cmdposttx}[0]{\mathsf{Post}}
\newcommand{\cmdread}[0]{\mathsf{Read}}
\newcommand{\cmdmine}[0]{\mathsf{Mine}}

\newcommand{\idealcommchannel}[0]{\mathsf{AUT}}
\newcommand{\cmdsend}[0]{\mathsf{Send}}
\newcommand{\cmdreceive}[0]{\mathsf{Rec}}
\newcommand{\cmdlatest}[0]{\mathsf{Latest}}

\newcommand{\cmdchallenge}[0]{\mathsf{Chal}}
\newcommand{\cmdresponse}[0]{\mathsf{Res}}

\newcommand{\cmdeval}[0]{\mathsf{Eval}}

\newcommand{\derive}[0]{\mathsf{Derive}}

\newcommand{\codecomment}[1]{\textcolor{gray}{// \text{#1}}}

\newcommand{\SR}[1]{\textcolor{red}{[SR: #1]}}
\newcommand{\AAA}[1]{\textcolor{blue}{[AAA: #1]}}
\newcommand{\pn}[1]{\textcolor{purple}{PN: #1}}
\newcommand{\ys}[1]{\textcolor{cyan}{YS: #1}}
\newcommand{\yeoh}[1]{\textcolor{orange}{Yeoh: #1}}
\newcommand{\shalti}[1]{\textcolor{pink}{Shalti: #1}}

\renewcommand{\SR}[1]{\errmessage{Error: \noexpand is disabled}}
\renewcommand{\AAA}[1]{\errmessage{Error: \noexpand is disabled}}
\renewcommand{\pn}[1]{\errmessage{Error: \noexpand is disabled}}
\renewcommand{\ys}[1]{\errmessage{Error: \noexpand is disabled}}
\renewcommand{\yeoh}[1]{\errmessage{Error: \noexpand is disabled}}
\renewcommand{\shalti}[1]{\errmessage{Error: \noexpand is disabled}}

\usepackage{thmtools}
\usepackage{thm-restate}

\newtheorem{assumption}{Assumption}

\usepackage[capitalise]{cleveref}
\crefname{lemma}{Lemma}{Lemmas}
\Crefname{lemma}{Lemma}{Lemmas}
\crefname{claim}{Claim}{Claims}
\Crefname{claim}{Claim}{Claims}

\usepackage[framemethod=TikZ]{mdframed}
\usepackage{xcolor}
\usepackage{environ}
\usepackage{varwidth}

\definecolor{titleBg}{RGB}{200, 220, 255}
\definecolor{frameColor}{RGB}{0, 0, 0}

\newlength{\contentwidth}

\NewEnviron{titledbox}[1]{%
    \setbox0=\hbox{%
        \begin{varwidth}{50cm}%
            \BODY
        \end{varwidth}%
    }%
    \setlength{\contentwidth}{\dimexpr\wd0+20pt\relax}%
    \begin{mdframed}[
        userdefinedwidth=\contentwidth,
        align=center,
        linewidth=1pt,
        linecolor=frameColor,
        backgroundcolor=white,
        roundcorner=0pt,
        innertopmargin=15pt,
        innerbottommargin=10pt,
        innerleftmargin=10pt,
        innerrightmargin=0pt,
        singleextra={%
            \node[
                fill=titleBg,
                draw=none,
                inner sep=5pt,
                rounded corners=2pt,
                anchor=west,
                xshift=10pt
            ] at (P-|O) {\bfseries #1};%
        },
        firstextra={%
            \node[
                fill=titleBg,
                draw=none,
                inner sep=5pt,
                rounded corners=2pt,
                anchor=west,
                xshift=10pt
            ] at (P-|O) {\bfseries #1};%
        },
    ]
        \BODY
    \end{mdframed}%
}

\newcommand{\mlxeb}{\mathsf{MLXEB}}
\newcommand{\xebtest}{\mathsf{XEB_{score}}}
\newcommand{\xebtestbool}{\mathsf{TestXEB}}
\newcommand{\qcam}{\mathsf{QCAM}}
\newcommand{\llha}{\mathsf{LLHA}}
\newcommand{\llqsv}{\mathsf{LLQSV}}
\newcommand{\qcamtime}{\mathsf{QCAMTIME}}
\newcommand{\trace}{\mathsf{Tr}}
\newcommand{\extract}{\mathsf{Ext}}

\usepackage[operators,sets,asymptotics, probability, advantage, adversary,keys,logic, primitives, ff, lambda]{cryptocode} 

\title{Verifiable Random Sampling}

\date{\today}

\author{Yeoh Wei Zhu}{Global Technology Applied Research, JPMorganChase}{}{}{}
\author{Soorya Rethinasamy}{Global Technology Applied Research, JPMorganChase}{}{}{}
\author{Anthony Alexiades Armenakas }{Global Technology Applied Research, JPMorganChase}{}{}{}
\author{Yash Satsangi }{Global Technology Applied Research, JPMorganChase}{}{}{}
\author{Shaltiel Eloul }{Global Technology Applied Research, JPMorganChase}{}{}{}
\author{Ruslan Shaydulin}{Global Technology Applied Research, JPMorganChase}{}{}{}

\authorrunning{Y. Wei Zhu et al.}

\begin{document}

\maketitle

\begin{abstract}
Verifiable random functions (VRF) underpin a wide range of applications that require publicly verifiable evaluations of a pseudorandom function on a given input. However, once the public key is published, the induced function is fixed and is a deterministic function of the input. This determinism can enable collusion and grinding-style attacks in which adversaries precompute and selectively exploit favorable input-output pairs. To address these limitations, we introduce the formal notion of verifiable random sampling (VRS). We propose a concrete VRS construction based on random quantum circuit sampling (RCS) executable on today's quantum computing devices. VRS supports multiparty protocols in which the verifier’s final output is a sample that is statistically close to a specified target distribution, while remaining publicly verifiable. We model the construction and prove its security within the constructive cryptography (CC) framework, thereby ensuring composability with other cryptographic protocols. Overall, our results provide a mechanism for verifiable random sampling that simultaneously guarantees sample freshness and public verifiability, enabling applications that require unpredictable, fresh randomness while preserving fairness through public verifiability.
\end{abstract}

\section{Introduction}
Many deployed applications rely on randomness to function correctly. 
Traditionally, random numbers are generated by extracting entropy from physical sources such as thermal noise \cite{Turan2018}, clock jitter \cite{Fischer2003}, and others \cite{Pironio2010}. For example, the Linux kernel aggregates entropy from keyboard interrupts, disk I/O, mouse movement, and device-driver interrupts \cite{Gutterman2006} to seed the randomness exposed through \texttt{/dev/random}. However, many modern applications operating in a cloud or distributed setting, such as smart contracts, have no access to a trusted local entropy source and must obtain randomness from a remote party over the Internet. In this setting, a client receiving a purported random sample has no way to verify two properties: (i) that the sample was actually drawn from the claimed physical source rather than fabricated by an adversarial provider, and (ii) that the sample is genuinely \emph{fresh}, i.e.,\ not precomputed. We address the problem of designing a primitive that provides both guarantees in a publicly verifiable manner.

A commonly used solution to this problem is the verifiable random function (VRF) \cite{micali1999verifiable}, which extends pseudorandom functions to the public-key setting, enabling anyone to verify that an output is computed correctly. VRFs have been adopted in a variety of applications, including e-lotteries \cite{lottery_vrf}, leader election in proof-of-stake protocols \cite{Kiayias2017,vrfalgorand}, and many others.
However, VRFs face fundamental limitations due to their deterministic structure: once the public key is fixed, the input-to-output mapping is fully deterministic. Consequently, the pseudorandomness guarantee depends on both the secrecy of the VRF provider's key and on non-collusion between the provider and the party choosing the input. If either condition fails, the output becomes predictable to the colluding parties and the guarantee is lost. Furthermore, even with an honest provider, the determinism enables grinding attacks in which a party who can influence the input precomputes many candidate evaluations and selectively submits the most favorable one. Since these limitations are structural, a suitable replacement must produce fresh outputs that are publicly verifiable, compose securely with other protocols, and support multiple use cases.

This motivates the following fundamental question: Can we construct a protocol that:
\begin{center}
\begin{minipage}{0.85\linewidth}
\itshape\raggedright

\begin{itemize}[leftmargin=1.6em, itemsep=0pt, topsep=0pt]
  \item outputs a sample from a target distribution $\mathcal{D}$,
  \item produces a publicly verifiable transcript,
  \item is composable, and
  \item is generic enough to support multiple use cases?
\end{itemize}
\end{minipage}
\end{center}

We answer this question in the affirmative by constructing a verifiable random sampling (VRS) protocol that builds on the RCS-based certified randomness primitive proposed in \cite{aaronson2023certified,Bassirian2026} and demonstrated experimentally in \cite{jpmc_cr, jpmc_cr_2}. In that primitive, a classical client samples random quantum circuits $\vec C := (C_1, \cdots, C_M)$ for some positive integer $M$ and sends them to a quantum provider, who evaluates $C_i\ket{0\cdots0}$ for all $i\in[M]$, measures in the computational basis, and returns the outcomes $\vec z:= (z_1,\cdots,z_M)$. Passing the XEB score test certifies that $\vec z$ carries genuine quantum entropy. In prior works, this primitive is an interactive two-party protocol whose guarantees rely on a private freshly sampled challenge and a tight response deadline measured on the verifier's local clock, neither of which is available to a third party observer of the transcript.

To turn this into a publicly verifiable sampling primitive, we replace the client's private circuit sampling with a public generation of the challenge circuits from the hash of the latest blockchain block, which is unpredictable to the provider before publication and recomputable by any third party afterwards. We replace the local timing deadline with the on-chain block interval $\timebound$ enforced by the consensus layer, and we require the response to be committed on-chain so that any third party can later reconstruct and verify the transcript. To make the protocol work in the blockchain setting, we rule out post-selection and stale-circuit reuse across clients through on-chain checks and domain separation. In addition, we split the classical client of \cite{jpmc_cr} into a challenge generator $T$ and a verifier $V$ that consumes the randomness, and we analyze the resulting three-party protocol in the constructive cryptography model. We provide a technical overview in \cref{sec:tech_overview} and summarize the contributions below.

\vspace{5mm}

\textbf{Contributions}
\begin{itemize}
    \item To the best of our knowledge, this is the first verifiable random sampling (VRS) formalization and construction $\idealrandom_{\mathcal{D}}^{\idealblockchain}$. Unlike VRFs, which yield a pseudorandom output, $\vrs_{\mathcal{D}}^{\idealblockchain}$ outputs a fresh random sample from a desired distribution $\mathcal{D}$.
    \item We construct $\idealrandom_{\mathcal{U}}^{\idealblockchain}$ based on random circuit sampling (RCS) that is implementable on quantum computing devices available today. Our model and framework support replacing the RCS-based entropy source with more efficient entropy sources in the future as fault-tolerant quantum computing devices become available.
    \item Given our VRS protocol construction $\idealrandom_{\mathcal{U}}^{\idealblockchain}$ for sampling from the uniform distribution $\mathcal{U}$, we show how to perform VRS for an arbitrary distribution $\mathcal{D}$ using new results on rejection sampling with error source distribution when considering information theoretic randomness.
    \item Our analysis is in the constructive cryptography (CC) model, which is composable. Our construction can therefore be arbitrary composed with other protocols.
    
\end{itemize}

\subsection{Related Works and Comparison to Prior Art}

We start by reviewing various related works in the domain of random number generation and then compare them against VRS.

\noindent\textbf{Randomness Beacon (RB).} Randomness beacons, introduced by \cite{rabin1983transaction}, provide a public source of randomness that is intended to be unpredictable prior to release in periodic intervals. Works such as \cite{Cascudo2017} propose a more efficient MPC protocol to derive an efficient randomness beacon assuming an  honest majority while \cite{Ewa2017} provide a randomness beacon in the (t,n) threshold model albeit with a non-negligible failure probability against a Byzantine adversary. Recently, beacons based on randomness generated by quantum processes have been demonstrated~\cite{kelsey2019a,Kavuri2025}. However, such beacons lack general provability guarantees, i.e., how a randomness consumer can be guaranteed that the published randomness is indeed a fresh value assuming the beacon committee is fully corrupted. 

\noindent\textbf{Verifiable Random Function (VRF).} VRF \cite{micali1999verifiable} offers a strong notion of provability: given a public key and an input, anyone can verify that an output was correctly computed under the corresponding secret key, without learning the secret key itself through the use of zero-knowledge proof. Recently, \cite{Giunta2024} studies an unbiasable VRF under skewed output distribution induced by malicious key setting.  Despite solving the provability issue, VRF suffers from the non-collusion assumption and from a deterministic output on a given input.

\noindent\textbf{Physically Unclonable Function (PUF).} PUF \cite{Gassend2002} leverages manufacturing variability to derive device-specific responses to challenges and is often proposed as a hardware-based primitive for identification \cite{che2015puf} and entropy extraction \cite{o2004puf}. The PUF typically assumes specific hardware properties and that the function is unlearnable; however, as shown in \cite{Rhrmair2010}, machine-learning-based modeling attacks can break this assumption. Moreover, PUF suffers from the key enrollment problem \cite{Pour2020}, where we have to trust both the provisioning and the enrollment process.

\noindent\textbf{Certified Randomness (CR).} Certified randomness focuses on proving that a sequence of bits originated from a \textit{truly} random process. Several approaches to certified randomness from quantum devices have been proposed. Random Circuit Sampling (RCS) based protocols \cite{aaronson2023certified,jpmc_cr,Bassirian2026} leverage the classical intractability of sampling from random quantum circuits where scoring high on cross-entropy benchmarking (XEB) score serves as a proof of quantum-ness and passing the XEB score test provably implies entropy generation under certain complexity assumptions (e.g., LLHA \cite{aaronson2023certified}). RCS-based protocols have been demonstrated experimentally \cite{jpmc_cr, jpmc_cr_2} on today's noisy quantum devices. However, these protocols suffer from expensive verification. On the other hand, \cite{Yamakawa2024} introduced a protocol based on an NP-search problem from error-correcting codes, proven hard classically but solvable by polynomial-time quantum computer. While this approach offers a more efficient verification process, this approach is not experimentally realizable today because it demands quantum decoding capabilities that are out of reach of today's noisy quantum computers. Therefore, we construct our verifiable random sampling protocol on top of an existing certified randomness protocol that works on current-scale quantum computers, while the efficiency of our construction can be further improved in the future by switching to a certified randomness protocol designed for fault-tolerant quantum computers. 

\noindent\textbf{Verifiable Random Sampling $\vrs$ (This work).} 
$\vrs$ solves the provability issue of randomness beacon while simultaneously guaranteeing fresh randomness, unlike the stale pseudorandom function of  a VRF. Compared to PUF, $\vrs$ relies on the principles of quantum mechanics, requiring neither a trust assumption nor a hardware assumption. Notably, $\vrs$ differs from certified randomness, as it samples from any desired distribution in a publicly verifiable manner. 
The comparisons are summarized in \cref{tab:compare}.

\label{sec:comparisons}
\begin{table}[ht!]
\rowcolors{1}{}{lightgray}
\begin{adjustbox}{minipage=\linewidth,scale=1}
\centering
\begin{tabular}{r|r|r|r|r|r}
  \hline
 & \makecell{$\vrs^{\idealblockchain}_{\mathcal{U}}$ \\ (This work)} & VRF & PUF & RB & CR - RCS \\
  \hline
  Collusion Resistance & \checkmark & \xmark & \xmark & \xmark & \checkmark \\
  Publicly Verifiable & \checkmark & \checkmark & \checkmark & \xmark & \xmark \\
  Non-deterministic Output & \checkmark & \xmark & \checkmark & \checkmark  & \checkmark \\
  Information-theoretic Randomness & \checkmark & \xmark & \xmark & \checkmark*  & \checkmark \\
  Specified Distribution & \checkmark & \xmark & \xmark & \xmark  & \xmark \\
   \hline
\end{tabular}
\end{adjustbox}
\caption{Comparison of VRS against various random number generation related primitives. \\ * The scheme may exhibit the required property depending on the construction.} 
\label{tab:compare}
\end{table}

\section{Technical Overview}
\label{sec:tech_overview}

In this section, we present the high-level technical overview of our paper. 
We consider the use case of requiring (unpredictable) verifiable randomness for a generic party $V$ in this paper. The high level overview of the construction $\vrs^{\idealblockchain}_{\mathcal{U}}$ is given in \cref{fig:vrs_overview}.
\begin{figure}
    \centering
    \includegraphics[width=0.85\linewidth]{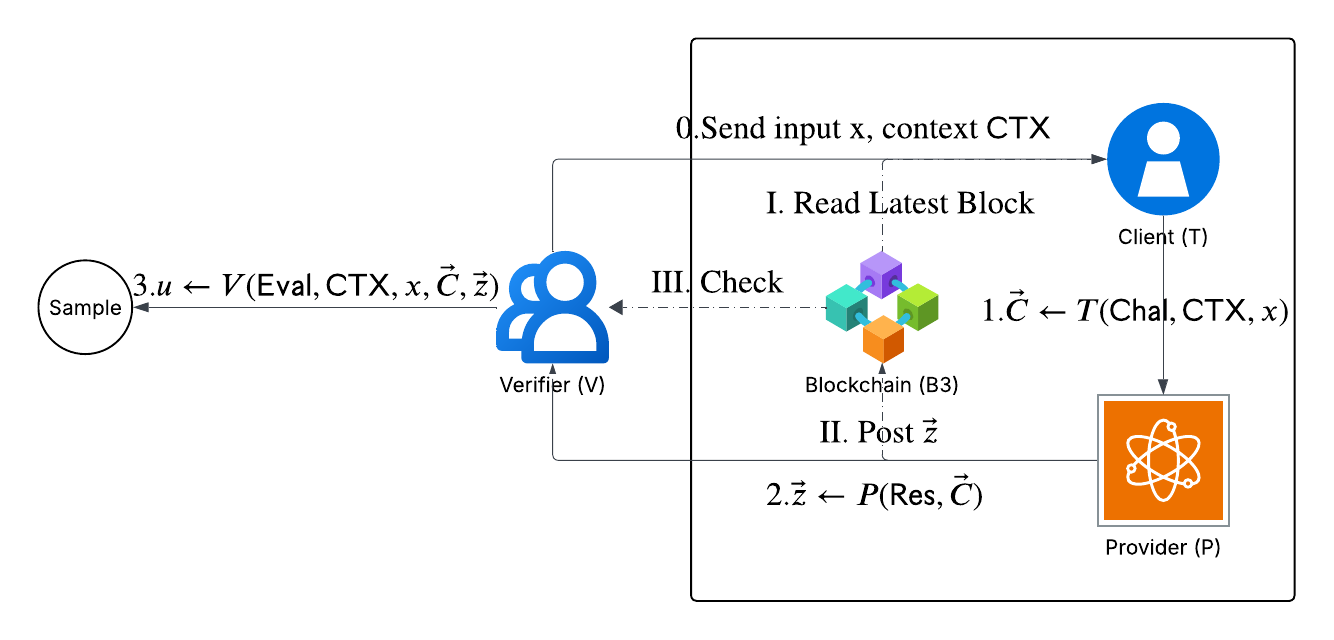}
    \caption{Simplified $\vrs_{\mathcal{U}}^{\idealblockchain}$ overview. Steps I, II, III are executed inside Steps 1, 2, 3 respectively. Step 0 is optional. Steps inside the box indicate online phase where timing is important. The syntax of $T,P,V$ is modified to include additional inputs for clarity.}
    \label{fig:vrs_overview}
\end{figure}

We first state the parties in the $\vrs$ protocol from \cref{sec:main_overview}:
\begin{itemize}
    \item Client $T$: Generates challenge circuit $\vec C$ using input $x$ and latest block data.
    \item Provider $P$: Provides the response $\vec z$ with some intrinsic entropy, and posts it to the blockchain.
    \item Verifier $V$: Verifies the response with respect to the circuit and outputs a sample $u$.
\end{itemize}

\subsection{(Provable) Random Number Generation: Transitioning from Deterministic Function to Non-Deterministic Sampling}

\textbf{Verifiable Random Function (VRF).} VRF remains the most common way to inject verifiable randomness into the execution of smart contracts. VRFs work by provably transforming the smart contract's input into an output that looks random enough for use cases such as blockchain-based fair matchmaking, fair ordering, lucky draws, and many others \cite{chainlink_usecase}. However, this model of supplying randomness critically relies on the non-collusion of the VRF provider and the party selecting the VRF input. Since the random function used in VRF is determined in advance (by publishing the public key), the colluding parties can indeed essentially ``predict'' the random function evaluation and gain advantage in Web3 on-chain games, lotteries and other applications. 

\textbf{Verifiable Randomness Source.} The deterministic nature of a fixed random function evaluation can be overcome by introducing a source that is probabilistic in nature. However, proving that a sample returned over the Internet was freshly drawn from some distribution is a non-trivial problem. To that end, Aaronson and Hung \cite{aaronson2023certified} showed that we can provably lower bound the min-entropy of the output string given by a potentially adversarial party's algorithm over the Internet. The protocol goes as follows. 
Let $M$ be the number of circuits used in the protocol. The protocol starts with a classical verifier generating $n$-qubit challenge circuits $\vec C = C_1, C_2, \ldots, C_M$ pseudorandomly, which are then sent to the provider. The provider's server then evaluates each circuit on the initial state $\ket{0^n}$ and measures the final state in the computational basis. The output string $\vec z = z_1,\ldots,z_M$ is sent back to the verifier as the response. In \cite{jpmc_cr}, the verifier then computes the XEB test score 
\begin{linenomath}
$$
\xebtest = \frac{N}{|\mathcal{V}|}\sum_{i \in \mathcal{V}} |\langle z_i \vert C_i \vert 0^n\rangle |^2-1,
$$
\end{linenomath}
where $N = 2^n$, $\mathcal{V}$ is the verification set, and the probability of observing a string $z_i$ is $p_i(z_i)=|\langle z_i \vert C_i \vert 0^n\rangle|^2$.
Assuming the hardness of spoofing the XEB test score, if the test score is sufficiently large $\xebtest \geq \chi$, then there must be some genuine entropy in the response string (See \cref{theo:smooth_min_entropy_bound_guarantee} for the formal statement). We provide an overview of the hardness justification in \cref{sec:hardness}. 

We briefly explain the intuition behind the hardness of spoofing the XEB score. For random circuits sampled from the Haar distribution, $C_i \sim Haar(N)$ (See \cref{sec:preliminaries} for the definition of the Haar measure), the probability distribution $\{p_i(z_i)\}_{z_i}$\footnote{The probability distribution approximates the Porter-Thomas distribution} is highly non-uniform, that is a large fraction of output strings have probability $p < 1/N$, and the probability density is concentrated on a relatively smaller fraction of the output strings. A genuine quantum device sampling from $p_i$ will therefore output strings that are biased toward these
higher-probability outcomes, while classical spoofing strategies, assuming the unstructured nature of RCS circuits from \cref{assum:rcs_unstructure}, will not be able to output enough of such higher probability output strings. 
The $\xebtest$ measures this bias by averaging the ideal probabilities $p_i(z_i)$ of the returned samples and rescaling by $N$. For classical spoofing attempts, it satisfies $\mathbb{E}[N\,p_i(z_i)]\approx 1$ and hence $\xebtest \approx 0$. Meanwhile, ideal sampling satisfies $\mathbb{E}[N\,p_i(z_i)]\approx 2$ and $\xebtest \approx 1$. Thus, a larger XEB score indicates stronger correlation with the intended circuit distribution.

Note here that a straightforward classical spoofing strategy is to simulate the entire RCS circuit, find the output distribution, and sample from it. This would lead to a high $\xebtest$ score. However, this type of simulation is computationally expensive even using state-of-the-art classical simulation methods. Thus, requiring that the response is submitted within a certain time-bound rules out such potential spoofing attacks \cite{aaronson2023certified}, and is essential to the protocol. We will make use of this verifiable randomness source in the next section to build Verifiable Random Sampling (VRS).

\subsection{Constructing $\vrs_{\mathcal{U}}$: From Certified Randomness to Verifiable Random Sampling (VRS).} 
\textbf{Extracting Uniform Sample from Randomness Source.} 
The output of the verifiable randomness source $Z$ is an output string $\vec z$ with a verifiable (smooth) min-entropy guarantee, such that $H^{\epsilon_h}_{\operatorname{min}}(Z) \geq B$ for some bound $B$. To extract uniform randomness, we assume a second source of (private) randomness. We can now apply the randomness extractor theorem (formally stated in \cref{def:randomness_extraction}). We specifically consider a quantum-proof strong extractor because the string $z$ is given by a (potentially malicious) quantum algorithm who may possess quantum side-information on it. The output distribution of the randomness extractor is then statistically close to the uniform distribution as stated in \cref{def:randomness_extraction}, even if one of the inputs is  publicly known.

\textbf{Public Verifiability.} The above output is not publicly verifiable, however. When a third party is given the transcript of the above interaction, they cannot be sure whether the interaction was completed on-time or whether the challenge circuits were honestly generated. We solve both of these issues by using a blockchain resource, which provides both an alternate entropy source to create the challenge circuits and a method to timestamp the interaction. Specifically, the latest blockchain hash data is used to generate the challenge circuits, which ensures that the interaction is linked to on-chain activity, serving as a third-party timestamp. The blockchain hash data contains enough computational entropy \cite{bitcoin_entropy} (though it is potentially biasable, see also \ref{sec:blockchain_fail}), which ensures that the challenge circuits are freshly generated. Furthermore, posting the responses to the chain is useful for enforcing timing constraints. Assuming that there exists a certain time bound $\timebound$ between each block publication such that
\begin{linenomath}
$$
\mathbf{B}[id+1].\tau - \mathbf{B}[id].\tau < \timebound,
$$
\end{linenomath}
where $\mathbf{B}$ denotes the blocks, and $\tau$ denotes the timestamp, we can be sure that the challenge-response protocol is completed within some time bound and can subsequently rule out the time-consuming spoofing or simulation attacks \cite{aaronson2023certified}. We note here that our protocol works with any bulletin-board-like scheme other than blockchain, as modeled by the ideal blockchain $\idealblockchain$ defined formally in \cref{sec:idealblockchain_formal}, as long as it provides a time-stamping feature at the granularity specified by $\timebound$, and broadcasts pseudorandom numbers at a regular interval that contains at least $\bcentropybound$-bits of entropy.

In addition to ensuring the timing constraint, we also enforce that no post-selection can be done by the client. A client might decide to launch several VRS protocols in parallel, post all the transcripts, and ``choose'' the best transcript. This would bias the output distribution. To prevent this type of manipulation, we as the third party itself must pick the transcript. For example, if the third-party supplied input $x$ (like $V$ in our protocol), it can either pick the earliest transcript, or pick among the available transcripts uniformly at random. Our verifier's algorithm will pick the first available transcript after a verifier specified time to ensure both that no pre-sampling is done before the verifier specified time and that no post-selection is done.

To further ensure public verifiability of the output sample $u$ by $V$ as a third-party observer outside the protocol against potentially malicious parties $T$, $P$, and $V$, the verifier further performs one of the following, depending on the context:
\begin{itemize}
    \item Non-colluding Verifier with Provider and One-time Use: The verifier attaches the second input to the randomness extractor alongside the sample $u$ together with the transcript.
    \item Non-colluding Verifier with Provider and Multiple Use: The verifier simply generates a NIZK proof of correct generation of a sample $u$ against a seed committed before the protocol execution.
    \item Corrupted Verifier: The verifier, instead of using private input, samples from the blockchain hash data of the block in which the response is published. Then, the verifier uses the sampled blockchain hash data to derive the final sample $u$. 
\end{itemize}

Note that for all three cases above, the extractor's second input remains secret until the moment the provider $P$ publishes its response. Now, the third party simply executes the same evaluation/verification algorithm $\pi_V$ executed by an honest verifier $V$.
We discuss these instantiations further in \cref{sec:trust_model_instant}.

\noindent \textbf{Ensuring Independent Inputs.} 
Randomness extractor theorems (see \cref{def:randomness_extraction}) require the independence of the primary and secondary inputs. For the randomness extractor used in the final sample generation, the other input is assumed to be private and is modeled as such in \cref{fig:ideal_random}, which means the extractor input string from the provider does not depend on this unknown second input.

\noindent \textbf{Domain Separation: Avoiding Challenge Circuit Staleness across Multiple Clients.} 
Multiple different clients might try to reuse the same input $x$ to derive their randomness. However, it can be seen that given the same input $x$ and the same blockchain hash, the same output string can be reused while passing the $\xebtestbool$ verification test, since the challenge circuit is the same. Note that it does not contradict the entropy guarantee from \cref{theo:smooth_min_entropy_bound_guarantee}, which requires the circuit to be (pseudo-)randomly generated on each invocation.
This issue can be solved by requiring the client (and verifier) to perform (and verify) domain separation in the input space for the hash function used to pseudorandomly derive the challenge circuit. Domain separation or salting in the hash function input space is a common security practice \cite{Bellare1993, Bellare2020, Davis2022, Krawczyk2010}. Note that we model this case using a generic domain separator $\context$, since the solution to this problem can be context or domain-specific as discussed above, and we discuss the domain separation in detail in \cref{sec:domain_seperation}. We assume that different contexts will be represented as different values in $\context$. For example, $\context$ can be a session ID.

\noindent \textbf{Protocol Summary. }
We give a $\vrs$ protocol high-level summary in \cref{sec:protocol_summary}.

\subsection{Use Cases and Extensions}
\subsubsection{Lifting $\vrs_{\mathcal{U}}$ to $\vrs_{\mathcal{D}}$. } 
We constructed a protocol $\vrs_{\mathcal{U}}$, given in \cref{sec:vrs_cc}, that allows one to sample from the uniform distribution. To sample from an arbitrary distribution $\mathcal{D}$, we can make use of the rejection sampling technique \cite{von195113}. The technique is built on the premise that perfect sampling from $\mathcal{U}$ is possible. However, it may be the case that we are only able to sample from $\tilde{\mathcal{U}}$ that is $\epsilon-$close to ${\mathcal{U}}$ in statistical distance. To alleviate this, we further derive a result on the rejection sampling with error stated as \cref{lemma:rejection_sampling_error} in \cref{sec:rejection_sampling_error}. The rejection sampling with error works by sampling from the imperfect distribution $\tilde{\mathcal{U}}$ and then rejects with probability as if we are sampling from the intended source distribution $\mathcal{U}$. With the rejection sampling with error, we are able to bound the statistical distance of the sample obtained from $\tilde{\mathcal{D}}$ with the intended output distribution $\mathcal{D}$, where $\tilde{\mathcal{D}}$ is the output distribution of the rejection sampling with error algorithm.
\subsubsection{Use Cases for $\vrs$}
\label{sec:usecases}

To demonstrate the effectiveness of the $\vrs$ formal model across various use cases with different trust assumptions, we showcase its applicability in the following scenarios:

\begin{enumerate}
    \item \textbf{Publicly Verifiable Randomness Derivation.} In a publicly verifiable randomness derivation use case, a verifier who may lack a local qualified randomness source would like to derive a random value by observing the transcript obtained from the interaction between a client $T$ and a provider $P$. Therefore, we have the setting where $P$ and $T$ are potentially malicious while $V$ is honest. This is the exact setting analyzed in \cref{sec:vrs_cc}. 
    \item \textbf{Smart Contract Random Value Derivation.} 
    Similarly, smart contract use cases typically fall into the above setting but with an honest $T$, since a smart contract usually lacks local randomness and behaves honestly. The setting in which both the client and verifier are honest (where the verifier also serves as the challenge-generating client) while the provider is potentially malicious is also addressed by the protocol defined in \cref{sec:vrs_cc}, since the capabilities of a malicious client are a strict superset of those of an honest one (a malicious party can always adopt the honest strategy). By substituting the second input (seed) in the construction with a future blockchain hash value that is unpredictable, the smart contract can likewise derive a statistically close sample.
    \item \textbf{e-Lottery: Verifiable Fairness for All.} 
    In an e-Lottery setting, a set of mutually distrusting clients $T$ and a lottery operator $LO$ wish to jointly compute a function $V$ that determines the winner. The winning criterion is that the party who draws the smallest sample from a uniform distribution wins. For simplicity, the seed can be defined as a shared secret $seed = \sum s_i$, jointly computed by each party $T_i$ and $LO$ over pre-published commitments $\mathsf{Commit}(s_i)$, where $s_i \in \mathbb{Z}_q$. A similar construction can be realized using threshold distributed key generation, as described in \cite{Gennaro2006}.
    In this setting, $T$ and $P$ are malicious, while $V$ is honest (enforced by the MPC protocol computing the function $V$), matching the construction given in \cref{sec:vrs_cc}.
\end{enumerate}

\section{Preliminaries}
\label{sec:preliminaries}

We use $a \sim \mathcal{D}$ to denote that $a$ is a random sample drawn from a distribution $\mathcal{D}$ and use $b \sample D$ to denote that $b$ is a random sample from a set $D$. Let $C$ be an $n$-qubit quantum circuit and let $N=2^n$ be the dimension of the Hilbert space $\mathcal{H}$. We define $\mathcal{P}(\mathcal{H})$ to denote the set of positive semi-definite operators on $\mathcal{H}$. The trace of an operator $\rho$ is given by $\trace(\rho) := \sum_i \braket{i|\rho|i}$ for any orthonormal basis $\{\ket{i}\}$. We can then define the normalized quantum states and sub-normalized quantum states as $\mathcal{S}(\mathcal{H}) := \{\rho \in \mathcal{P}(\mathcal{H}) : \trace( \rho) = 1\}$ and $\mathcal{S}_{\leq}(\mathcal{H}) := \{\rho \in \mathcal{P}(\mathcal{H}) : \trace( \rho) \leq 1\}$. In the interest of notational convenience, we use the tensor product notation $\vec C := C_1\otimes \cdots\otimes C_M$ to denote the composite circuit and consequently $\vec z$ to denote the output string obtained from measuring the state $\vec C\ket{0^{n\cdot M}}$ in the computational basis rather than addressing them separately via index notation. In practice, the circuit is represented using the individual factor $C_i$ rather than the full tensor product.

For RCS experiments, the relevant distribution is the distribution of bitstrings given a particular circuit $C$, denoted by $p_C(z) := |\langle z |C| 0^n\rangle|^2$. An alternate notation used in this work is using the $\measure$ operation, i.e., $\measure(\ket{\phi})$ is the random variable obtained by measuring the state $\ket{\phi}$ in the computational basis. 

We denote the hash function as $\hash(\cdot)$ and use the overloaded symbol  $\hash_{\mathcal{D}}(\cdot)$ to mean hashing into the specific distribution $\mathcal{D}$. Let $X,Y$ be two random variables with support $S_X, S_Y$ respectively, the statistical distance is defined as 
$ \mathsf{SD(X,Y) := \frac{1}{2}\sum_{u \in S_X \cup S_Y}} |\Pr[X=u] - \Pr[Y=u]|.$

\textbf{Haar-random Circuit.} Let $Haar(N)$ denote the unique bi-invariant probability measure on the unitary group $U(N)$ so that for any fixed unit vector $\ket{\phi} \in \mathbb{C}^N$ and $C \sim Haar(N)$, $C\ket{\phi}$ is uniformly distributed over the unit sphere in $\mathbb{C}^N$. For deriving a Haar-random circuit, it is known that with a uniform seed, we can sample from a distribution that approximates $Haar(N)$ in a pseudorandom manner \cite{Metger2024, Brando2016}. 
We denote such an algorithm $\derive_{Haar(N)^M}: \{0,1\}^s\rightarrow (C_1, \cdots, C_M)$ where $C_i$ is an $n$-qubit quantum circuit.

\textbf{Entropy.} For a quantum state $\rho$, $H$ denotes the von Neumann entropy $H(\rho):= - \trace (\rho \log \rho)$, which reduces to the Shannon entropy in the classical setting $H(X) := - \sum_{i} \Pr(x_i) \log \Pr(x_i)$ for some random variable $X$, when $\rho$ is diagonal. $H^{\epsilon}$ denotes the smooth version of the entropy measure \cite{tomamichel2009fully}. 
Given a classical-quantum state $\rho_{XA}$ classical on X, the conditional min-entropy is defined as $H_{\operatorname{min}}(X|A)_\rho :=- \log p_{\operatorname{guess}}(X|A)_{\rho}$ where $ p_{\operatorname{guess}}(X|A)_{\rho}$ is the optimal probability of guessing. The smooth version is defined using $\epsilon$-ball of states around $\rho \in \mathcal{S}(\mathcal{H})$ defined as $B^{\epsilon}(\rho) := \{\sigma \in \mathcal{S}_{\leq}(\mathcal{H}) : P(\rho,\sigma)\leq \epsilon\}$, where the purified distance metric is given by
$P(\rho,\sigma):= \sqrt{1-F^2(\rho,\sigma)}$, and $F(\rho,\sigma):=\norm{\sqrt\rho \sqrt \sigma}_1$ and $\norm{A}_1 = \trace({\sqrt{A^\dagger A}})$.
Given a smoothing parameter $\epsilon$, the smooth min-entropy can then be defined as $H_{\operatorname{min}}^{\epsilon}(X|A)_\rho := \sup_{\sigma_{XA} \in B^{\epsilon}(\rho)} H_{\operatorname{min}}(X|A)_{\sigma}$. 
In the classical setting, we use $H_{\operatorname{min}}(X)$ to denote the min-entropy for a random variable X such that $H_{\operatorname{min}}(X)=H_{\infty}(X)= - \log (\max_{x} \Pr[X=x])$. 

\subsection{Cryptographic Assumption}
\label{sec:crypto_assump}
We first recall the entropy guarantee that arises from complexity assumptions. We only state the theorem here without definition, see \cref{sec:hardness} for more details.

\begin{theorem}[Passing MLXEB test with low entropy solves $\llqsv$(Theorem 8, \cite{jpmc_cr})]
There exists a quantum-classical Arthur-Merlin protocol which on input of an $O(n)$-bit advice string solves $\llqsv_B(\mathcal{D})$ ,which means $\llqsv_B{(\mathcal{D})}\in \qcamtime(2^Bn^{O(1)})/O(n)$, if there exists a device $\mathcal{A}$ which runs in polynomial time and satisfies the following:
\begin{itemize}
    \item  $\mathcal{A}$ solves $\mlxeb$ with probability $q=\Pr_{\vec C \sim \mathcal{D}^k, \vec z \sim \mathcal{A}(\vec C)}\left[\sum_{i=1}^k p_{C_i}(z_i) \geq \frac{bk}{N}\right]$,
    \item $H(Z | \vec C )_{\mathcal{A}} < \frac{B}{2} \left(\frac{bq-1-\epsilon}{b-1}\right)$ where $\epsilon = n^{-O(1)}$.
\end{itemize}
\end{theorem}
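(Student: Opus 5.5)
We prove the contrapositive-style reduction: from a low-entropy $\mlxeb$ solver $\mathcal{A}$ we build a $\qcam$ protocol for $\llqsv_B(\mathcal{D})$. The approach has three steps. First, convert the entropy bound into a structural statement about $\mathcal{A}$. Then use that structure to let Merlin, with an $O(n)$-bit advice string, point Arthur to a low-entropy branch of $\mathcal{A}$. Finally, show that this branch yields a statistic that separates the two cases of $\llqsv$.

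\textbf{Step 1 (averaging over rounds).} Since $\mathcal{A}$ passes with probability $q$, the average per-circuit score $\mathbb{E}[N p_{C_i}(z_i)]$ is at least roughly $bq$. The failing fraction contributes at least $0$ and at most something small, so the excess over the uniform-guess value $1$ is at least $bq-1-\epsilon$. By chain rule and Markov-type averaging, the bound $H(Z\mid\vec C) < \frac{B}{2}\cdot\frac{bq-1-\epsilon}{b-1}$ yields the following. Let $\delta := (bq-1-\epsilon)/(b-1)$. Then a $\delta$-weighted fraction of circuits $C_i$ simultaneously satisfy two conditions:
\begin{itemize}
\item their conditional output entropy is below $B$, say at most $B/2$ on average;
\item their conditional score excess $\mathbb{E}[N p_{C_i}(z_i)]-1$ is bounded below by a constant, using that a single circuit's excess is at most $b-1$.
\end{itemize}
The factor $(b-1)$ in the denominator is exactly what this split between "good" and "bad" rounds produces. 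I would fix an index $i^\ast$ and a prefix of prior outputs (the conditioning) that achieves this; describing it is the $O(n)$-bit advice.

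\textbf{Step 2 (the Arthur–Merlin protocol).} Given an instance circuit $C$ from $\llqsv_B(\mathcal{D})$, Arthur plants $C$ at position $i^\ast$. He fills the other positions with fresh samples from $\mathcal{D}$, which is legitimate because the instance is distributed as $\mathcal{D}$. A low-entropy branch has min-entropy at most about $B$, so some output $z$ has probability $\gtrsim 2^{-B}$. Merlin supplies the random coins and branch, and Arthur repeats $2^{B}n^{O(1)}$ times to recover the heavy outputs and estimate their weights, running $\mathcal{A}$ quantumly each time. This gives running time $2^B n^{O(1)}$, matching $\qcamtime(2^Bn^{O(1)})$. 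Arthur then checks whether the recovered string has elevated $p_C(z)$ in the sense that $\llqsv$ requires. In the YES case the score excess from Step 1 transfers to it; in the NO case no branch can correlate with $p_C$ beyond $1/N$ in expectation.

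\textbf{Main obstacle.} The hardest step is Step 1's quantitative conversion of Shannon/von Neumann entropy (an average quantity) into a per-branch min-entropy bound that still carries the score excess. I would first apply a Markov argument on the entropy, keeping branches with entropy at most $B$ on a $\geq \delta/2$ fraction. Then I would use flattening/typical-set arguments to pass to min-entropy at a polynomial loss absorbed into $\epsilon=n^{-O(1)}$. I would also need soundness against a cheating Merlin. This should follow since Arthur verifies the claimed heavy strings by sampling, so Merlin can only affect completeness. As in Theorem 8 of \cite{jpmc_cr}, whose argument we follow, the details of this bookkeeping are where the constant $\frac{1}{2}$ in the entropy threshold originates.
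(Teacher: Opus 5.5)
\textbf{There is a genuine gap: the protocol in Step 2 never looks at the strings $s_i$, so it cannot tell YES from NO.} The paper gives no proof of this statement. It quotes Theorem 8 of \cite{jpmc_cr} verbatim, which adapts the reduction of \cite{aaronson2023certified}, so your sketch has to stand on its own.

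In $\llqsv$ the input is a list of $M=O(N^3)$ pairs $(C_i,s_i)$. The circuits satisfy $C_i\sim\mathcal{D}$ in \emph{both} the YES and the NO case; only the way the $s_i$ were drawn differs. Your Arthur takes ``an instance circuit $C$'', runs $\mathcal{A}$ on it, and tests whether $\mathcal{A}$'s heavy output $z$ has elevated $p_C(z)$. Any such test is a function of $(C,\mathcal{A}(C))$ alone. That pair has the same law on YES and NO instances, so the test accepts with the same probability in both cases. Your claim that in the NO case no branch correlates with $p_C$ is false: $\mathcal{A}$ scores just as well on circuits from a NO instance. Separately, Arthur cannot evaluate $p_C(z)$ to precision $1/N$ in time $2^Bn^{O(1)}$; that is exactly the verification task $\llqsv$ is meant to make hard.

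Several supporting claims are also wrong:
\begin{itemize}
\item The per-circuit excess $Np_C(z)-1$ is not bounded by $b-1$; it can be as large as $N-1$. So that is not where the denominator comes from.
\item A prefix of earlier outputs and circuits is not $O(n)$ bits, and Arthur cannot impose it on fresh runs.
\item Merlin cannot supply ``the coins'' of a quantum device. Arthur must run $\mathcal{A}$ himself, and soundness is precisely about what a cheating Merlin can achieve.
\end{itemize}

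\textbf{The missing idea is to use $s_i$ as the probe: measure how often $\mathcal{A}$ outputs $s_i$ on $C_i$.} Write $D_C$ for $\mathcal{A}$'s output distribution, $H_C=\{z: D_C(z)\ge 2^{-t}\}$, and let $x=\Pr[z\notin H_C]$, averaged over $C$. Markov's inequality on the surprisal $-\log D_C(z)$ gives $x\le H(Z|\vec C)/t$. No flattening or Shannon-to-min-entropy conversion is needed. An upper bound on min-entropy is trivial and beside the point; what matters is that little of $\mathcal{A}$'s mass is light.

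Define $w(C,s):=D_C(s)\mathbf{1}[s\in H_C]$. In the YES case $\mathbb{E}[w]=\tfrac1N\mathbb{E}[Np_C(z)\mathbf{1}[z\in H_C]]\ge\tfrac bN(q-x)$. In the NO case $\mathbb{E}[w]=\tfrac1N(1-x)$. The gap is therefore $\tfrac1N\bigl(bq-1-(b-1)x\bigr)$. With $t=B/2$, the hypothesis $H(Z|\vec C)<\tfrac B2\cdot\tfrac{bq-1-\epsilon}{b-1}$ is exactly what makes this gap at least $\epsilon/N$. That is where $(b-1)$ and $\epsilon$ come from.

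For the $k$-fold device your planting idea is the right one:
\begin{itemize}
\item put $C_i$ at a uniformly random coordinate $j$ and fill the rest with fresh samples from $\mathcal{D}$;
\item define heaviness on the whole vector $\vec z$, and test $z_j=s_i$.
\end{itemize}
Averaging over $j$ turns the passing probability $q$ of the averaged score into the same bound. Arthur can estimate $w$ with $2^{t}n^{O(1)}$ runs of $\mathcal{A}$.

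Because the per-pair gap is only $\epsilon/N$, the long list is essential. Over $M=O(N^3)$ pairs the weighted count $\sum_i w(C_i,s_i)$ concentrates, and its YES and NO values differ by a $1+\Omega(\epsilon)$ factor. Merlin helps Arthur lower-bound this exponentially large count with a Goldwasser--Sipser style protocol, and Arthur checks each index Merlin names by running $\mathcal{A}$ himself. The $O(n)$-bit advice is then naturally the threshold between the two counts: a device-dependent number of $O(\log M)=O(n)$ bits, not a conditioning prefix.
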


$\llqsv$ is a conjectured hard problem in \cite{aaronson2023certified} and it is proved that $\llqsv(\mathcal{D}) \notin \qcamtime(\allowbreak 2^Bn^{O(1)})/q(2^{B}n^{O(1)})$ in the random oracle model.
Despite the above, it is currently hard to instantiate an experimentally viable protocol based on complexity-theoretic arguments. 
We instead rely on the following min-entropy bound obtained by using the assumption given below, which is supported by the complexity argument given in \cref{sec:hardness} as otherwise we could exploit the presumed structure to solve $\llqsv$ with the gap required: 

\begin{assumption}[Unstructuredness of Random Circuit Sampling and Hardness of $\xebtest$]
\label{assum:rcs_unstructure}
Output distribution of sufficiently deep RCS circuits is ``unstructured''. That is the classical description of the circuit does not leak any information about which output are heavy or light. Consequently, the best strategy for optimizing $\xebtest$ is the frugal rejection sampling strategy described in \cite{jpmc_cr} which approximately simulates the circuit classically and returns the output as it is. 
\end{assumption}

\begin{remark}
The above assumption is used in the Google quantum supremacy experiment \cite{Arute2019}. A similar assumption is used in \cite{aaronson2023certified, jpmc_cr_2, Bassirian2026} and the assumption is analyzed in \cite{Aaronson2020, Boixo2018}. 
\end{remark}

For convenient we define the XEB test as follow.
\begin{definition}[$\xebtestbool_{\alpha, \chi}$ \cite{jpmc_cr}]
\label{def:xebtest_bool}
Let $\mathcal{D}$ be a probability distribution over quantum circuits on n qubits. Given $\vec C := (C_1, \dots, C_M)$ drawn from $\mathcal{D}^k$, a sample $\vec z =(z_1,\dots,z_M) \in (\{0,1\}^n)^M$, a threshold score $\chi$ and a test set size $\alpha$, $\xebtestbool_{\alpha, \chi}(\vec C,\vec z)$ is defined as follow:
$$
\xebtestbool_{\alpha, \chi}(\vec C,\vec z)=
\begin{cases}
1, & \text{if } \frac{2^n}{|\mathcal{V}|} \sum_{i \in \mathcal{V}} p_{C_i}(z_i) \geq \chi +1 \text{ where } \mathcal{V} \sample \{v : v \subseteq [M] \wedge |v| =\alpha \},\\
0, & \text{otherwise}.
\end{cases}
$$
\end{definition}

Unstructuredness of RCS motivates the oracle-access model where the (classical) adversary is given a quantum computer resource as a query oracle that evaluates the quantum circuit honestly and returns the sampled output string. 
We follow Ref.~\cite{jpmc_cr}, which specifically studies the case where the adversarial algorithm has just enough classical computing power to spoof $M-Q_{\vec C}$ number of samples and has to honestly return at least $Q_{\vec C}$ samples from the quantum computer within a time bound $\timeboundtheo$. However, we remark that our analysis can be straightforwardly extended to the more general adversary of Ref.~\cite{jpmc_cr_2}, which we leave to future work.
We first recall the simplified one-shot protocol without batching from \cite{jpmc_cr} that proceeds as follows:
\begin{enumerate}
    \item The (challenge) client pseudorandomly generates $M$-numbered $n$-qubits circuits $\vec C:= C_1,\ldots,C_M$.
    \item For each $i \in [M]$, the provider evaluates the circuits $C_i\ket{0^n}$ and measures in computational basis to obtain $z_i$. The provider then returns $\vec z := z_1,\ldots,z_M$.
    \item The (verifier) client aborts if $\xebtestbool_{\alpha,\chi}(\vec C, \vec z)=0$ for some 
    $\alpha, \chi$, otherwise the verifier outputs the sample $\vec z$.
\end{enumerate}

Let $\epsilon_s \in (0,1/4)$. 
Let $\tilde I$ be the classical side information consisting of the seed $K_{seed}$ used to generate the challenge circuit $\vec C$ and some side information $S^0$. 
We now recall the min-entropy bound from \cite{jpmc_cr} but we include the necessary assumptions made by \cite{jpmc_cr} into the theorem statement.
\begin{theorem}[Entropy Guarantee, Theorem 1 \cite{jpmc_cr}]
\label{theo:smooth_min_entropy_bound_guarantee}
Assuming the protocol described above and the adversary making at most $Q_{\vec C}$ queries to the quantum circuits oracle $\mathcal{O}_{\vec C}$ defined below, where $\vec C \sim Haar(N)^{M}$, that on input circuits $\vec C$, outputs $Z=(z_1,\ldots,z_M)$ 
, it holds that
$$H^{\epsilon_s}_{\operatorname{min}}(Z | \tilde I) \geq Q_{\operatorname{min}}(n-1) + \log \epsilon_s$$
conditioned on non-aborting event $\Omega$ where $Q_{\operatorname{min}} = \min \{Q : \epsilon_{\mathsf{adv}}(Q,\chi)\footnote{$\epsilon_{\mathsf{adv}}(Q,\chi)$ is monotonically non-decreasing.}  \geq 4\epsilon_s\}$,  $\Pr[\Omega]$ is upper bounded by $\epsilon_\mathsf{adv}(Q,\chi)$.

\begin{center}
  \begin{pchstack}
  \procedure[]{$\mathcal{O}_{\vec C}$($i$):}{
    \codecomment{$q_{\vec C}$ are initialized to 0}\\
    \text{if }q_{\vec C} > Q_{\vec C}\  \ret \bot  \\
    q_{\vec C} = q_{\vec C}+1\\
    \ret \measure(\vec C[i]\ket{0^n})
  }
\end{pchstack}
\end{center}
\end{theorem}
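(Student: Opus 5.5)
This is Theorem 1 of \cite{jpmc_cr}, restated with its modelling assumptions made explicit. The natural plan is therefore to reproduce the structure of that argument rather than invent a new one. The core idea is a counting argument over the outputs the adversary can produce with at most $Q_{\vec C}$ genuine oracle samples. Everything else about the adversary is classical and depends only on $\vec C$ and the side information $\tilde I = (K_{seed}, S^0)$.

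\textbf{Step 1: decompose the adversary.} Fix the side information $\tilde I$. Conditioned on $\tilde I$, the circuits $\vec C$ are determined, because they are derived from $K_{seed}$. The output $Z$ is then a function of the classical randomness of the adversary, which we may fold into $\tilde I$, together with the answers returned by $\mathcal{O}_{\vec C}$. Each oracle answer is $\measure(C_i\ket{0^n})$, a sample from $p_{C_i}$. For $C_i \sim Haar(N)$, these probabilities are Porter--Thomas distributed, so $\max_z p_{C_i}(z) \le 2^{-(n-1)}$ except with small probability over $C_i$. The unqueried coordinates are deterministic given $\tilde I$. By \cref{assum:rcs_unstructure}, this deterministic ``frugal'' simulation is the best classical strategy, so it carries no advantage in producing heavy outputs.

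\textbf{Step 2: relate the query count to acceptance.} Let $Q$ denote the number of honestly sampled coordinates. Define $\epsilon_{\mathsf{adv}}(Q,\chi)$ as the probability that $\xebtestbool_{\alpha,\chi}(\vec C,\vec z)=1$ given that exactly $Q$ coordinates were sampled honestly. Spoofed coordinates contribute $\mathbb{E}[N p]\approx 1$ and honest ones $\approx 2$, and the test set $\mathcal{V}$ is chosen at random. A concentration (hypergeometric/Hoeffding) bound then shows that $\epsilon_{\mathsf{adv}}$ is non-decreasing in $Q$. Consequently, every strategy with $Q<Q_{\min}$ is accepted with probability below $4\epsilon_s$. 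Now condition on the non-abort event $\Omega$. The total mass of branches with fewer than $Q_{\min}$ honest samples is at most $4\epsilon_s/\Pr[\Omega]$. We remove these branches by smoothing. I expect this to be the main obstacle: the smoothing must be done in purified distance and must respect the conditioning on $\Omega$. To keep the parameter loss at order $\epsilon_s$, I would first use the fact that purified distance is at most $\sqrt{2\cdot\text{trace distance}}$, and then pick the smoothing radius so that the constants match the factor $4\epsilon_s$.

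\textbf{Step 3: bound the guessing probability.} On the remaining branches there are at least $Q_{\min}$ coordinates whose values are fresh samples from $p_{C_i}$, independent of $\tilde I$ given $\vec C$. Because the $C_i$ are independent, the probability of any fixed $\vec z$ is at most $\prod_{i} \max_z p_{C_i}(z) \le 2^{-Q_{\min}(n-1)}$. This holds even though the adversary adaptively chooses which indices to query, since each answer is a new independent sample. The conditioning on $\Omega$ can inflate this bound by a factor of at most $1/\Pr[\Omega]$. Taking $\Pr[\Omega]\ge\epsilon_s$, since otherwise the statement is vacuous, we obtain $p_{\operatorname{guess}} \le 2^{-Q_{\min}(n-1)}/\epsilon_s$. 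This gives $H^{\epsilon_s}_{\operatorname{min}}(Z|\tilde I) \ge Q_{\min}(n-1)+\log\epsilon_s$, as claimed.

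The remaining step is to handle the bad event in which some Haar draw has an atypically large $\max_z p_{C_i}(z)$. This event is exponentially unlikely and can be absorbed into the smoothing parameter.
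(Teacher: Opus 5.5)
The paper gives no proof of this statement; it imports Theorem~1 of \cite{jpmc_cr}. Your reconstruction therefore has to stand on its own, and its core step is false. In Step~1 you claim that $\max_z p_{C_i}(z)\le 2^{-(n-1)}$ except with small probability over $C_i\sim Haar(N)$. This is backwards. For a Haar-random state, $(p_C(z))_z$ is uniform on the simplex, i.e.\ $p_C(z)=E_z/\sum_{z'}E_{z'}$ with i.i.d.\ $E_z\sim\mathrm{Exp}(1)$. Hence $\max_z p_C(z)\approx \ln N/N=n\ln 2/N$, and all $N$ probabilities stay below $2/N$ only with probability roughly $(1-e^{-2})^N$. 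So the ``bad event'' your last paragraph absorbs into the smoothing is the typical event. The product-of-maxima bound in Step~3 then gives at best about $Q_{\operatorname{min}}(n-\log_2(n\ln 2))$, not $Q_{\operatorname{min}}(n-1)$.

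The quantity that actually equals $n-1$ is the collision entropy: $\mathbb{E}_{C\sim Haar(N)}\sum_z p_C(z)^2=2/(N+1)\le 2^{-(n-1)}$. Fold the adversary's coins into $\tilde I$, as you do. Then the honest coordinates appear verbatim in $Z$ and the spoofed ones are functions of them and $\tilde I$. So for a fixed set of $Q$ honest coordinates, the average conditional collision probability of $Z$ given $\tilde I$ is at most $2^{-Q(n-1)}$. A Markov step finishes. The pairs $(z,\iota)$ with $\Pr[Z=z\mid\tilde I=\iota]>2^{-k}$ carry total mass at most $2^{k-Q(n-1)}$. Cutting them at $k=Q(n-1)+\log\epsilon_s$ gives exactly $H^{\epsilon_s}_{\operatorname{min}}(Z\mid\tilde I)\ge Q(n-1)+\log\epsilon_s$ under trace-distance smoothing. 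The $\log\epsilon_s$ term is thus the cost of converting collision entropy to smooth min-entropy, not a $1/\Pr[\Omega]$ penalty as in your Step~3.

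Your $\epsilon_s$ accounting in Steps~2--3 also does not close. After renormalizing on $\Omega$, the branches with fewer than $Q_{\operatorname{min}}$ honest samples can carry mass $4\epsilon_s/\Pr[\Omega]\ge 4\epsilon_s$. That already exceeds a smoothing radius of $\epsilon_s$ in trace distance. Purified distance makes it worse, not better: deleting mass $\delta$ costs purified distance $\sqrt{2\delta-\delta^2}$, so $P\le\sqrt{2T}$ is a loss and no choice of constants rescues it. Step~3 also spends the entire $\log\epsilon_s$ on the $1/\Pr[\Omega]$ factor, leaving nothing for any smoothing.

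The statement is meant to be read, and is used in the paper's $\hybrid_2$, in either/or form. The adversary has a fixed budget $Q_{\vec C}$. If $Q_{\vec C}<Q_{\operatorname{min}}$, then $\Pr[\Omega]\le\epsilon_{\mathsf{adv}}(Q_{\vec C},\chi)<4\epsilon_s$, and this case is handled as an abort outside the entropy bound. If $Q_{\vec C}\ge Q_{\operatorname{min}}$, the collision argument applies directly with no branch removal. To avoid an extra $\log(1/\Pr[\Omega])$ from conditioning, state the bound for the subnormalized state restricted to $\Omega$.

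Finally, with this paper's purified-distance smoothing, staying within radius $\epsilon_s$ lets the Markov step cut only about $\epsilon_s^2/2$ of mass. That yields $Q(n-1)+2\log\epsilon_s-1$ rather than $+\log\epsilon_s$. This mismatch is inherited from the cited statement and is not something your proof can repair.
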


Next, from a string with enough smooth min-entropy, we can effectively extract $\epsilon-$uniform randomness out of it using the strong two-source quantum-proof randomness extractor defined in the following:
\begin{definition}[Two-Source Quantum-Proof Strong Extractor (Lemma 74, \cite{jpmc_cr_2})] 
Let a function $\extract : \{0,1\}^{n_1} \times \{0,1\}^{n_2} \rightarrow\{0,1\}^\ell$ be a quantum-proof strong ($n_1,\kappa_1,n_2,\kappa_2,\ell,\epsilon_{ext}$) two-source extractor. Then for any independent source $\rho_{X_1X_2E}$ classical on $X_i$ with $H_{\operatorname{min}}(X_1|E)_\rho \geq \kappa_1$ and $H^{\epsilon_{s}}_{\operatorname{min}}(X_2|E)_{\rho} \geq \kappa_2 + \log_2(1/\epsilon_2)$ where $\epsilon_2 \in (0,1)$ and $\epsilon_s \in (0,1]$,  we have
$$\norm{\rho_{\extract(X_1,X_2)X_iE } - \tau_{\ell} \otimes \rho_{X_iE}}_{1} \leq 6\epsilon_{s} + 2\epsilon_{ext} + 2\epsilon_2$$ where ${\tau_\ell}$ is a maximally mixed state on $\ell$ bits.
\label{def:randomness_extraction}
\end{definition}

We need strong quantum-proof randomness extractor because the weak source (in our case the provider's output string) is supplied by the (potentially malicious) provider and the adversary may have quantum side-information on it. The second input that is held by the verifier is assumed to be secret and independent from the provider's source. In this paper, we sometimes refer to the extractor's second (private) input as seed\footnote{Note that the strong two-source extractor here allows either one of the sources to be public while the strong seeded extractor only allows the seed to be public.}.

\subsection{Constructive Cryptography (CC)}
In the framework of constructive cryptography (CC) \cite{Maurer2012ConstructC, Maurer2016} (also known as abstract cryptography (AC) \cite{Maurer2011AbstractC}), security is defined as the indistinguishability between two worlds, rather than a game between a challenger and an adversary. The model defines worlds using concepts of resources and converters, analogous to the functionalities and protocols defined in the Universal Composability (UC) model \cite{canetti2001universally}. The CC framework has been used to analyze various cryptographic systems such as DIDComm \cite{Badertscher2024}, quantum key distribution (QKD) \cite{Portmann2022}, ratcheting \cite{Jost2019}, functional encryption \cite{Matt2015} and others \cite{Kerber2021, Portmann2017}. Resources expose interface(s) to which converter(s) can attach to. In the CC framework, the real world is composed of resources composed in parallel. Then, the protocol converter(s) are attached to the resources to convert them into other resources. The ideal world resource is made up of an ideal resource, to which a simulator is attached to. The privacy notion and other properties that should be exhibited by the protocol should be captured in the ideal resource. In the CC framework, we prove the following property
\begin{linenomath}
$$
\pi R \approx_{\epsilon} S\sigma,
$$
\end{linenomath}
where $\pi$ represents the protocol executed on top of a real resource $R$ and $\sigma$ represents a simulator acting on an ideal resource $S$. It translates what the adversary can do in the real world (left side) into what the adversary needs to do in the ideal world (right side) to achieve the same result. Assuming that $\mathsf{REAL} := \pi R $ and $\mathsf{IDEAL} := S\sigma $ expose the same interfaces, we now consider a distinguisher that attaches to the resources and tries to tell them apart. If any distinguisher can only succeed with negligible probability, then we achieve the indistinguishable property as required. Then, we say that $\pi$ constructs $S$ from $R$ denoted as $R \xrightarrow[]{\pi} S$.

\textbf{Composability.} The construction notion briefly discussed above is proven to be composable in the CC framework \cite{Maurer2011AbstractC, Maurer2012ConstructC}. That is,
\begin{linenomath}
$$
R \xrightarrow[]{\pi} S \wedge S \xrightarrow[]{\pi'} T \implies R \xrightarrow{\pi' \circ \pi} T.
$$
\end{linenomath}

\textbf{Syntax.} We follow the resource syntax from \cite{Badertscher2024} when defining the interfaces that a resource exposes. The interfaces are defined by first specifying the party that the interface is exposed to, followed by the command as the first parameter while the rest of the parameters represent the inputs to the interface. For example, the command $\cmdposttx$ for the interface accessible to party $i$ taking input $m$ is defined as $i(\cmdposttx,m)$. We further use the syntax $this$ to address the interface described in the resource itself. 
 
\section{Verifiable Random Sampling (VRS)}
\label{sec:vrs}

In this section, we describe the VRS syntax and security properties at a high level. The high-level overview for the use cases of $\vrs$ is given in \cref{sec:usecases}. In \cref{sec:vrs_with_cc_combined}, we will reconcile the analysis and protocol given in the CC model in \cref{sec:vrs_cc} with the high-level overview given below in \cref{sec:vrs_with_cc_combined}.
First, we define Verifiable Random Sampling (VRS) using similar syntax to VRF as follows:

\begin{definition}
Verifiable random sampling (VRS) is a tuple of algorithms $\vrs_{\mathcal{D}}:= (\gen, \cmdchallenge, \cmdresponse, \eval)$ defined as follows:

The following algorithms are executed by verifier $V$:
\begin{itemize}
 \item $(\idealseed_{chl}, \idealseed_{ext}) \gets \vrs.\gen(\secparam)$: On input a security parameter $\secparam$, outputs a challenge seed $\idealseed_{chl}$ and an extractor seed $\idealseed_{ext}$.
  \item $((u, \sigma)/\bot) \gets \vrs.\eval(\idealseed_{chl},\idealseed_{ext},x,\vec C,\vec z)$: On input a challenge seed $\idealseed_{chl}$, an extractor seed $\idealseed_{ext}$, an input string $x\in \{0,1\}^\ell$, a quantum circuit vector $\vec C$, and an output string $\vec z \in \{0,1\}^n$, outputs a sample $u$ with a proof $\sigma$ or aborts with $\bot$. 
\end{itemize}

The following algorithm is executed by client $T$:
\begin{itemize}
   \item $\vec C \gets \vrs.\cmdchallenge(\idealseed_{chl},x)$: On input a challenge seed $\idealseed_{chl}$ and an input string $x \in \{0, 1\}^\ell$, outputs a circuit vector $\vec C$.
\end{itemize}

The following algorithm is executed by provider $P$:
\begin{itemize}
  \item $\vec z \gets \vrs.\cmdresponse(\vec C)$: On input a quantum circuit vector $\vec C$, outputs a classical output response string $y \in \{0, 1\}^n$.
\end{itemize}
\label{def:vrs}
\end{definition}

For the purpose of showing the security property captured by the ideal resource, we state informally the security properties that VRS should satisfies as follows:
\begin{enumerate}
    \item Correctness: $\eval$ outputs a valid sample $u$ from the distribution $\mathcal{\tilde D}$ when the protocol is executed honestly.
    \item Non-deterministic / Freshness: Given two protocol executions of $\cmdchallenge, \cmdresponse,\cmdeval$ with the same inputs, the resulting outputs $u,u'$ are not the same, $u\neq u'$ except with collision probability caused by the independent sampling from the output distribution.
    \item Information-theoretic Randomness: The output distribution $\mathcal{\tilde D}$ for samples $u$ is statistically close to the ideal distribution $\mathcal{D}$.
\end{enumerate}

We specifically model and proved the case where party $T$ and party $P$ are corrupted. We further discuss the trust model with different corruption patterns in \cref{sec:trust_model_instant}.

\section{Verifiable Random Sampling from Quantum Random Circuit Sampling, $\idealrandom_{\mathcal{D}}^{\idealblockchain}$}
\label{sec:vrs_cc}
\subsection{Overview}
\label{sec:main_overview}

We first list the parties involved and their roles. We refer to \cref{sec:tech_overview} for a high-level overview of our scheme. The symbols used can be found in \cref{tab:symbol_table}.

\textbf{Party.} We model and analyze the case where the verifier $V$ is an honest party while the client $T$ and the provider $P$ can be corrupted by the adversary. We further provide the analysis of different corruption patterns in \cref{sec:trust_model_instant}. The parties are summarized below:
\begin{enumerate}
    \item $P$: Provider with access to the quantum computing resource,
    \item $E/\mathcal{A}$: Adversary,
    \item $T$: Client who interacts with the provider to generate a timestamped transcript,
    \item $V$: Verifier who would check the output of provider and finally outputs the final sample.
\end{enumerate}

\begin{figure}[ht!]
\centering
\makebox[\linewidth][c]{%
\begin{adjustbox}{minipage=1.176\linewidth,scale=0.85, center}
\centering
\begin{pchstack}
\begin{tabular}{c|l}
\hline
Symbol & Meaning \\
\hline
\multicolumn{2}{c}{\textbf{Certified Randomness}}\\
\hline
$\vec C$ & Quantum circuit vector \\
$\vec z$ & Response vector (Classical string) \\
$n, N = 2^n$ & Number of qubits \\
$\xebtest$ & XEB test score \\
\hline
\multicolumn{2}{c}{\textbf{Blockchain}}\\
\hline
$\aux$ & Block auxiliary information \\
$\mathbf{B}$ & Blockchain block \\
$\blockheight$ & Block ID \\
$\transactionhash$ & Transaction ID \\
$\timebound$ & Block time bound \\
$\bcentropybound$ & Block entropy bound \\
$\tau$ & Timestamp \\
\hline
\multicolumn{2}{c}{\textbf{Others}}\\
\hline
$\context$ & Use case specific context\\
$\mathcal{U,D}$ & Uniform and arbitrary distributions \\
\hline
\end{tabular}

\pchspace

\begin{tabular}{c|l}
\hline
Symbol & Meaning \\
\hline
\multicolumn{2}{c}{\textbf{Model}}\\
\hline
$\idealblockchain$ & Blockchain\\
$\idealrandom$ & Verifiable random sampling\\
$\idealcommchannel$ & Communication channel\\
$\idealseed$ & Seed\\
$\pi$ & Protocol (Converter)\\
\hline
\multicolumn{2}{c}{\textbf{Function}}\\
\hline
$\hash()$ & Hash function\\
$H()$ & von Neumann entropy\\
$H_{\operatorname{min}}()$ & Min-entropy\\
$H^{\epsilon}_{\operatorname{min}}()$ & Smooth min-entropy\\
\hline
\multicolumn{2}{c}{\textbf{Parties}}\\
\hline
$T,P,V$ & Client, Provider, Verifier\\
$E, \mathcal{A}$ & Adversary\\
$\mathcal{H}$ & Honest party set\\
$\mathcal{C}$ & Corrupted party set\\
$pub$ & All party set\\
$\mathcal{I}$ & Generic party set\\
\hline
\end{tabular}
\end{pchstack}
\end{adjustbox}
}
\caption{Symbol Table.} 
\label{tab:symbol_table}
\end{figure}

In the rest of this section, we will first introduce various preliminary resources used for modeling the protocol in the context of CC such as blockchain $\idealblockchain$, a read-only storage resource with basic access control called the seed resource $\idealseed$, 
communication channel resource $\idealcommchannel$. Then, we introduce verifiable random sampling as an ideal resource $\vrs_{\mathcal{D}}^{\idealblockchain}$, followed by a protocol ($\pi_{T}^{\vrs}$, $\pi_{P}^{\vrs}$,$\pi_{V}^{\vrs}$) that will construct the ideal resource, and finally we prove the security of our construction.

\subsection{Messaging Channel, $\idealcommchannel$}
\label{sec:ideal_comm_channel_ideal_ro}

Authenticated channels are sufficient for our use cases, and our construction would work over insecure communication channels as well. We use the authenticated messaging channel from \cite{Matt2015} denoted as resource $\idealcommchannel^{I\rightarrow R}$. The authentication property of the messaging channel means that the adversary can neither modify any message nor impersonate either party. Slightly modifying the definition from \cite{Matt2015}, the authenticated channel can be defined as follows:

\begin{definition}[Adapted from \cite{Matt2015}]
An authenticated channel from initiator (I) to receiver (R), denoted by $\idealcommchannel^{I\rightarrow R}$, with an eavesdropper (E), is a resource with three interfaces $I$, $R$, and $E$. On input a message $m$ at interface $I$ using the command $\cmdsend$, the same message can be read at interfaces $R$ and $E$ using the receive command $\cmdreceive$.  
\label{def:communication_channel}
\end{definition}

\subsection{Bulletin Board in Timed Batched Mode with Auxiliary Information, $\idealblockchain$}
\label{sec:idealblockchain_formal}

\begin{figure}
    \begin{adjustbox}{minipage=1.176\linewidth,scale=0.85, center}
    \begin{titledbox}{Resource $\idealblockchain$}
    \begin{pchstack}
        \begin{pcvstack}
            \procedure[]{Init():}{
                \mathbf{\blockheight} := 1 \\
                \mathbf{B^*} := \emptyset \\ 
                \mathbf{B} := [B_0]
            }
            \pcvspace
             \procedure[]{$i(\cmdlatest)$:}{ 
                \codecomment{Latest BlockID}\\
                \ret \blockheight-1
            }
        \end{pcvstack}
      \pchspace
      \procedure[]{$i(\cmdread, id, txid := \emptyset)$:}{
      \codecomment{$i \in \{ P,E,T,V\}$} \\
      \codecomment{Read message}\\
       \text{if } id = -1: \\
       \quad id := \blockheight-1\\
       \text{if } txid = \emptyset: \\
       \quad \ret \mathbf{B}[id] \\
       \text{else:}\\
       \quad \ret \mathbf{B}[id][txid]
      }
    \pchspace
      \procedure[]{$i(\cmdposttx, m)$:}{
      \codecomment{$i \in \{ P,E,T,V\}$} \\
        \transactionhash \gets \hash(\blockheight || m)\\
        \codecomment{Record message}\\
        \mathbf{B^*}[\transactionhash] := m \\
        \ret (\blockheight, \transactionhash)
      }
    \pchspace
        \procedure[]{$M(\cmdmine, \tau, \aux)$:}{
        \codecomment{Finalize Block}\\
        \text{if } \tau - \mathbf{B}[\blockheight-1].\tau \geq \timebound: \\
        \quad \ret \bot \\
        \mathbf{B}[\blockheight] := (\mathbf{B^*}, \tau, \aux, \blockheight) \\
        \mathbf{B^*} := \emptyset\\
        \blockheight := \blockheight + 1\\
        \ret \ok
      }
    \end{pchstack}
    \end{titledbox}
    \end{adjustbox}
    \caption{$\idealblockchain$ - (Ideal) Batched bulletin board resource where $\blockheight$ is the unique sequential block identifier, $\mathbf{B^*}$ is the latest pending block content, and $\mathbf{B}$ contains all historic block information. $\mathbf{B^*}$ is appended to $\mathbf{B}$ when a new block is mined. 
    }
    \label{fig:ideal_blockchain}
\end{figure}

The proposed scheme requires the existence of a public ledger that can record messages and later provide the recorded messages when queried. Our starting point is the bulletin board abstraction provided in \cite{bulletinboard_Choudhuri2017, bulletinboard_Gaddam2023} for the UC model. The bulletin board allows the user to post an arbitrary message to the board and later retrieve it using the associated counter (denoted as $\blockheight$). However, it is insufficient for our scheme, since we additionally require the pinned message to be batched into a time-sensitive sequential ordering where the time interval of each batch of posted messages is upper bounded by some bound $\timebound$. While the notion of time is investigated in \cite{Kiayias2016, Baum2021, LiuZhang2020}, these works are concerned with synchronized clocks for the purpose of synchronous computation in multi-party computation. In particular, the participants in our scheme do not need to have access to a synchronized clock, and it is sufficient for a trustworthy service to timestamp the message in a coarse manner. Combining the aforementioned properties, we give an ideal bulletin board resource with timed batched mode in the context of the CC model.

\textbf{Batched Bulletin Board resource, $\idealblockchain$}. 
We assume that $\blockheight$ is a unique increasing block identifier, and we associate each batch with ``block''. In practice, the blocks are published periodically within some time bound $ \timebound$ with increasing $\blockheight$. $\idealblockchain$ provides timing guarantee such that for all valid $id$, 
\begin{linenomath}
$$
\mathbf{B}[id+1].\tau - \mathbf{B}[id].\tau < \timebound.
$$ 
\end{linenomath}
The timestamp $\tau$ is determined by the block miner. It is generally checked by consensus that the timestamp is reasonable and chronologically consistent. The block miner is considered to be honest, and its honest behavior is encouraged through an incentive mechanism (reward fees) as analyzed in \cite{Buterin2019} and \cite{Kiayias2017}, where the honest behavior is a Nash equilibrium \cite{Nash1950}. Thus, we assume that the miner in this case will always behave honestly.
We further denote the block auxiliary information (including the block header) as $\aux$. In addition, we assume that the block hash contains some (computational) min-entropy lower-bounded by $\bcentropybound$. That is,
\begin{linenomath}
$$
H_{\operatorname{min}}^{\operatorname{comp}}(\hash(\mathbf{B}, \aux)) \geq \bcentropybound,
$$
\end{linenomath}
where $H_{\operatorname{min}}^{\operatorname{comp}}$ is defined to be the HILL computational min-entropy \cite{Hstad1999, Barak2003}.
This assumption is shown to hold for Bitcoin in \cite{bitcoin_entropy}, where it is analyzed that the computational min-entropy of the Bitcoin block header that is mined using a proof-of-work mechanism can be lower-bounded. We further discuss biasable block hashes and personalized randomness expansion in \cref{sec:blockchain_fail}. 
The blockchain resource is given formally in \cref{fig:ideal_blockchain}, where $i$ denotes the interface for blockchain users, and $M$ denotes the interface for a miner. 

\textbf{On the instantiation of $\idealblockchain$ using existing Blockchains.} While it is possible to instantiate batched bulletin board resource $\idealblockchain$ using just a single central trusted party to perform all the computation and bookkeeping, it is often the case that the same functionality instantiated by distributed systems is more desirable due to the decentralization of trust. Blockchain, an append-only ledger that is maintained by ad-hoc communities, is a natural choice for the use case. Proof-of-work blockchains like Bitcoin \cite{nakamoto2008bitcoin} and proof-of-stake blockchains like Ethereum \cite{wood2014ethereum} are two of the most prominent blockchains. The block miners (proposers or validators), who are maintaining the state of the blockchain, are incentivized to produce a consistent state through an inherent reward structure and penalty system for malicious misbehavior. Similar reasoning is used in \cite{bulletinboard_Choudhuri2017} to instantiate Bitcoin/Ethereum blockchain as the bulletin board.

\subsection{Ideal Verifiable Random Sampling Resource $\idealrandom^{\idealblockchain}_{\mathcal{D}}$}

\begin{figure}[h!]
    \begin{adjustbox}{minipage=1.176\linewidth,scale=0.85,center}
    \begin{titledbox}{Resource $\idealrandom^{\idealblockchain}_{\mathcal{D}}$}
    \begin{pchstack}
    \begin{pcvstack}
       \procedure[]{Init():}{
       \codecomment{Variables representing communication}\\
        \mathbf{C, R} := \emptyset
      }
    \pcvspace
    \procedure[]{$E(\cmdreceive)$:}{
         \ret (\mathbf{C,R})
    }
    \pcvspace
    \begin{pchstack}
        \begin{pcvstack}
            \procedure[]{$T(\cmdsend,\mathbf{C'} )$:}{
                \mathbf{C} := \mathbf{C'}\\
                \ret \ok
            }
            \pcvspace
            \procedure[]{$P(\cmdread)$:}{
                \ret \mathbf{C}
            }
        \end{pcvstack}
        \pchspace
        \begin{pcvstack}
            \procedure[]{$P(\cmdsend, \mathbf{R'})$:}{
                \mathbf{R} := \mathbf{R'}\\
                \ret \ok
            }
            \pcvspace
            \procedure[]{$V(\cmdread)$:}{
                \ret \mathbf{R}
            }
        \end{pcvstack}
    \end{pchstack}
      \pcvspace
       \procedure[]{$P(\cmdresponse)$: \codecomment{$P \in \mathcal{H}$}}{
        (\vec C, \blockheight_{\mathbf{C}}, \cdot) := \mathbf{C}\\
        \vec{z} \sim \mathcal{D}_{\vec C} \text{ where } \mathcal{D}_{\vec C}(x) = |\langle x | \vec C \ket{0^{n \cdot M}} |^2\\
        (\blockheight_{\mathbf{R}}, \transactionhash) \gets \idealblockchain.P(\cmdposttx, (\mathbf{C}, \vec z))\\
        \mathbf{R} := (\vec z, \blockheight_{\mathbf{R}}, \transactionhash) \\
        \ret \ok
      }
      \pcvspace
       \procedure[]{$P(\cmdresponse, \mathbf{R'})$: \codecomment{$P \in \mathcal{C}$}}{
        \mathbf{R} := \mathbf{R'} \\
        \ret \ok
      }
    \end{pcvstack}
      \pchspace
      \begin{pcvstack}
      \procedure[]{$T(\cmdchallenge, \context, x)$: \codecomment{$T \in \mathcal{H}$} }{
        (\widehat B, \cdot, \aux,\blockheight_{\mathbf{C}}) \gets \idealblockchain.T(\cmdread, \idealblockchain.T(\cmdlatest)) \\
        \vec C \gets \derive_{Haar(N)^M}(\hash(\hash(\widehat B, \aux)||\context||x)) \\
        \mathbf{C} := (\vec C,\blockheight_{\mathbf{C}}, \context, x) \\
        \ret \ok
      }
      \pcvspace
       \procedure[]{$V(\cmdeval, \context, x)$: \codecomment{ $V \in \mathcal{H}$}}{
        (\vec z, \blockheight_{\mathbf{R}}, \transactionhash) := \mathbf{R} \\
        (\vec C^*, \blockheight_{\mathbf{C}}^*, \context^*, x^*, \vec z^*) \gets \idealblockchain.V(\cmdread, \blockheight_{\mathbf{R}}, \transactionhash)\\
        (\widehat B^*, \cdot, \aux^*, \cdot) \gets \idealblockchain.V(\cmdread, \blockheight_{\mathbf{C}}^*) \\
        \vec C \gets \derive_{Haar(N)^M}(\hash(\hash(\widehat B^*,\aux^*)||\context^*||x^*)) \\
        \text{if } \exists (\blockheight', \transactionhash') \text{\ s.t.\ } (\context^*,x^*) \in \idealblockchain.V(\cmdread, \blockheight', \transactionhash') \\
        \quad \wedge \blockheight' < \blockheight_{\mathbf{R}} \wedge \transactionhash' < \transactionhash: \\
        \quad \ret \bot \codecomment{Abort on earlier reply, detect multi-attempt}\\
        u \sim \mathcal{D}\\
        (\cdot, \blockheight_{min} , \cdot):= \context \\
        \text{if }\ \vec C^* = \vec C \wedge \xebtestbool(\vec C, \vec z^*, \chi)
        \wedge \blockheight_{\mathbf{R}} - \blockheight_{\mathbf{C}}^* \leq \timeboundsec \\
        \quad \wedge \context^* = \context \wedge \blockheight_{\mathbf{C}}^* \geq \blockheight_{min} \wedge x^* = x:\\
        \quad \ret u \\
        \text{else : } \ret \bot
      }
     \pcvspace
        \procedure[]{$T(\cmdchallenge, \mathbf{C}')$: \codecomment{$T \in \mathcal{C}$}}{
            \mathbf{C} := \mathbf{C'} \\
            \ret \ok
        }
       \pcvspace
        \procedure[]{$V(\cmdeval, u)$: \codecomment{$V \in \mathcal{C}$}}{
            \ret u
        }
      \end{pcvstack}

    \end{pchstack}
    \end{titledbox}
    \end{adjustbox}
    \caption{$\idealrandom^{\idealblockchain}_{\mathcal{D}}$ - (Ideal) Verifiable Random Sampling resource where $\vec C =(C_1 \otimes \cdots \otimes C_{M})$. $\mathcal{H}$ is the honest parties, and $\mathcal{C}$ are the corrupted parties. $\blockheight_{\tau}$ is a timing parameter. 
    }
    \label{fig:ideal_random}
\end{figure}

In this section, we describe the ideal VRS resource $\idealrandom^{\idealblockchain}_{\mathcal{D}}$ and explain why it closely models the real-world setting. The ideal resource is parameterized by an internal blockchain resource $\idealblockchain$ and the resulting target sample distribution $\mathcal{D}$. 
The blockchain resource is used to record the response with a timestamp and to provide a string with sufficient min-entropy for the challenge circuit generation as described in \cref{sec:idealblockchain_formal}.
The resource captures the fact that the challenge circuits are derived by combining an input $x$, a context $\context$, and blockchain data rather than by sampling directly from the challenge circuit distribution. This construction guarantees that the challenge circuits can only be known after the hash block header is mined, assuming all other inputs are predetermined, so that provider cannot pre-sample ahead of time. 
In response, the ideal resource samples from the intended response output distribution\footnote{While it might seem that sampling from any source with sufficient entropy that still passes $\xebtestbool$ is sufficient for the ideal resource specification for the honest interface $P(\cmdresponse)$, there might be a distribution where the former holds but it is distinguishable from the intended response output distribution in the real protocol instantiation.}.
Finally, the verifier $V$ interface will only output a sample from $u \sim \mathcal{D}$ if and only if the challenge circuit is correctly derived, the response is given within a reasonable time, and the given provider output string $\vec z$ passes the $\xebtestbool$ test. This ensures a valid sample $u$ is returned if and only if we are sure that the provider output string $\vec z$ contains enough entropy and is derived independently of the sampling extractor's second input. We also make sure to take the earliest response string to avoid any protocol-level oversampling attempt.
Let $\timeboundsec$ denote the maximum tolerable block interval as specified in \cref{fig:ideal_random}.
To apply \cref{theo:smooth_min_entropy_bound_guarantee}, we set $\timeboundsec \cdot \timebound \leq \timeboundtheo$, where  $\timeboundtheo$ is the maximum latency considered by  \cref{theo:smooth_min_entropy_bound_guarantee}.

Let $T$ denote the client, $P$ denote the provider, $V$ denote the verifier, and $E$ denote the adversary.
We denote $pub:=\{T,P,V, E\}$ to be the public set consisting of all parties, $\mathcal{H}$ to be the honest party set $\mathcal{H}:= pub \setminus \mathcal{C}$, and $\mathcal{C} \subseteq pub$ to be the corrupted party set. A complete specification of the ideal resource $\idealrandom$ is given in \cref{fig:ideal_random}. In our use case, the verifier responsible for verifying the provider's output string is assumed to be honest.

\textbf{Communication.} For communication between client $T$, provider $P$, and the verifier $V$, the resource assumes an authenticated communication channel with eavesdropper $E$, as captured by interface commands $\cmdread,\cmdreceive$ for parties $(T,P,V),\allowbreak E$ respectively. The ideal resource is defined in \cref{sec:ideal_comm_channel_ideal_ro} and is used in the protocol construction in the next section.

 \textbf{Seed.} We model the extractor's second input, the seed, as an ideal resource $\idealseed_{\mathcal{I}}$ (see \cref{fig:ideal_seed}) that stores and supplies a seed to any party in the set $\mathcal{I}$. For example, by setting $\mathcal{I}:= \{V\}$, access to the seed is restricted to party $V$, making it a private seed. We defer the discussion of context-dependent seed initialization and instantiation to \cref{sec:trust_model_instant}. 

\textbf{Domain Separator.} 
We model the domain separator as an input $\context$ to the commands $\cmdchallenge$ and $\cmdeval$. This models the requirement that the verifier $V$ runs the verification algorithm and outputs the sample only if both the circuit-generating party $T$ and the verifier $V$ agree on the same context (e.g., session ID, party $T$'s unique identifier, etc.) We further encode the verifier-specified challenge block constraint $\blockheight_{min}$ within the context $\context$, allowing the verifier to specify that the sampling must only occur at or after the block $\blockheight_{min}$.

\begin{figure}
    \centering
    \begin{adjustbox}{minipage=1.176\linewidth,scale=0.85, center}
    \begin{titledbox}{Resource $\idealseed_{\mathcal{I}}^{\ell}$}
    \begin{pchstack}
      \procedure[]{Init():}{
        seed \sample \{0,1\}^{\ell}
      }
      \pchspace
      \procedure[]{$i(\cmdread)$:}{
      \codecomment{$i \in \mathcal{I}$} \\
        \ret seed
      }
    \end{pchstack}
    \end{titledbox}
    \end{adjustbox}
    \caption{$\idealseed_{\mathcal{I}}^{\ell}$ Ideal random seed resource}
    \label{fig:ideal_seed}
\end{figure}

\subsection{Verifiable Random Sampling Protocol Converter $\pi^{\vrs}_{\{T,P,V\}}$}

We now discuss the real resource and the protocol converter. We start with the prominent use case considered during the construction of $\vrs$ in this paper. We discuss and showcase multiple different use cases and potential other instantiations in \cref{sec:usecases}. In addition, we discuss different threat models or corruption patterns for  $\vrs$ in \cref{sec:trust_model_instant}.

\textbf{Use Case.} In practice, a server or an algorithm that needs a verifiable random value from a client party $T$ will play the role of party $V$, for example to determine the lottery winner or to perform leader election. However, client $T$ generally cannot prove that it performs random sampling correctly, but the client $T$ still wants to supply its own sample rather than letting the verifier $V$ pick a sample on its behalf. Client $T$ together with provider $P$ will jointly produce a verifiable sample from a specified distribution with some guaranteed min-entropy. Finally, the verifier $V$ can perform the randomness extraction using the extractor from \cref{def:randomness_extraction}. At this point, the sample can be proven to be a legitimate sample from the uniform distribution. Client $T$ can ensure the last step is performed correctly using techniques from \cref{sec:trust_model_instant}. We give the construction for $\vrs_{\mathcal{U}}^{\idealblockchain}$ as follows:

\textbf{Client $T$.} From the perspective of party $T$, it hashes the latest public blockchain data together with an input $x$ and a context $\context$ to pseudorandomly derive context-specific quantum circuits $\vec C$. $T$ then submits the challenge circuits to $P$. The protocol for party $T$ is given in \cref{fig:protocol_converter_all}.

\textbf{Prover $P$.} The protocol for party $P$ is given in \cref{fig:protocol_converter_all}. The party $P$ protocol is simple. It just needs to evaluate the given quantum circuit vector $\vec C\ket{0^{n \cdot M}}$ and then measure the resulting state. Then, it returns the measured bit-string $\vec z $ back to the requester. The provider $P$ also needs to record its response on-chain using $\idealblockchain.P(\cmdposttx, m)$ so that the response is timestamped for some response message $m$. 

\textbf{Verifier $V$.}  The verifier $V$ tests that the on-chain response is consistent with its view and runs the test $\xebtestbool(\vec C, \vec z)$ to ensure the received sample is a valid sample. $V$ further checks that there does not exist any other earlier response on-chain to prevent post-selection bias. Finally, it runs the randomness extractor on the received sample using a secret seed and outputs the distribution sample given by the randomness extractor. By the randomness extractor theorem, the sample is a sample from an $\epsilon-$close uniform distribution. The concrete protocol is given in \cref{fig:protocol_converter_all}.

\begin{figure}[ht!]
    \begin{adjustbox}{minipage=0.48\linewidth,scale=0.85}
    \hspace*{-\fboxsep}
        \centering
 \begin{titledbox}{Converter $\pi_{T}^{\idealrandom}$}
    \begin{pchstack}
       \procedure[]{$T(\cmdchallenge,\context, x)$:}{
        (\widehat B, \cdot, \aux,\blockheight_{\mathbf{C}}) \gets  \idealblockchain.T(\cmdread, 
        \\ \quad\quad\quad\quad\quad \quad\quad \quad \quad \idealblockchain.T(\cmdlatest)) \\
        \vec C \gets \derive_{Haar(N)^M}(\hash(\hash(\widehat B \\ 
        \quad \quad \quad \quad \quad\quad \quad\quad\quad ,\aux)||\context||x)) \\
        \mathbf{C} := (\vec C,\blockheight_{\mathbf{C}}) \\
        \idealcommchannel^{T \rightarrow P}.T(\cmdsend, \mathbf{C})  \\
        \ret \ok
      }
    \end{pchstack}
    \end{titledbox}
      \begin{titledbox}{Converter $\pi_{P}^{\idealrandom}$}
    \begin{pcvstack}
    \pcvspace
      \procedure[]{$P(\cmdread)$:}{
        \mathbf{C} \gets \idealcommchannel^{T \rightarrow P}.P(\cmdreceive) \\
        \ret \mathbf{C}
      }
    \pcvspace
      \procedure[]{$P(\cmdresponse, \vec z)$:}{
        (\vec C, \blockheight_{\mathbf{C}}) \gets this.P(\cmdread)\\
        \vec z \gets \measure(\vec C \ket{0^{n\cdot M}} \} \\
        (\blockheight_{\mathbf{R}}, \transactionhash) \gets \idealblockchain.P(\cmdposttx, (\mathbf{C}, \vec z)) \\
        \mathbf{R} := (\vec z, \blockheight_{\mathbf{R}}, \transactionhash) \\
        \idealcommchannel^{P \rightarrow V}.P(\cmdsend, \mathbf{R})  \\
        \ret \ok
      }
    \end{pcvstack}
    \end{titledbox}
    \end{adjustbox}
      \begin{adjustbox}{minipage=0.3\linewidth,scale=0.85}
    \hspace*{-\fboxsep}
        \centering
        \begin{titledbox}{Converter $\pi_{V}^{\idealrandom}$}
        \begin{pcvstack}
      \procedure[]{$V(\cmdread)$:}{
      \mathbf{R} \gets \idealcommchannel^{P \rightarrow V}.V(\cmdreceive)  \\
        \ret \mathbf{R}
      }
      \pcvspace
       \procedure[]{$V(\cmdeval, \context, x)$:}{
       (\vec z, \blockheight_{\mathbf{R}}, \transactionhash) \gets this.V(\cmdread) \\
        (\vec C^*, \blockheight_{\mathbf{C}}^*, \context^*, x^*, \vec z^*) \gets \idealblockchain.V(\cmdread, \blockheight_{\mathbf{R}}, \transactionhash)\\
        (\widehat B^*, \cdot, \aux^*, \cdot) \gets \idealblockchain.V(\cmdread, \blockheight_{\mathbf{C}}^*) \\
        \vec C \gets \derive_{Haar(N)^M}(\hash(\hash(\widehat B^*,\aux^*)||\context^*||x^*)) \\
        \text{if } \exists (\blockheight', \transactionhash') \text{\ s.t.\ } (\context^*,x^*) \in \idealblockchain.V(\cmdread, \blockheight', \transactionhash') \\
        \quad \wedge \blockheight' < \blockheight_{\mathbf{R}} \wedge \transactionhash' < \transactionhash: \\
        \quad \ret \bot \codecomment{Abort on earlier reply, detect multi-attempt}\\
        seed_q \gets \idealseed_{V}.V(\cmdread) \\
        u \gets \extract(\vec z, seed_q)\\
       (\cdot, \blockheight_{min} , \cdot):= \context \\
        \text{if }\ \vec C^* = \vec C \wedge \xebtestbool(\vec C, \vec z^*) 
        \wedge \blockheight_{\mathbf{R}} - \blockheight_{\mathbf{C}}^* \leq \timeboundsec \\
        \quad \wedge \context^* = \context \wedge \blockheight_{\mathbf{C}}^* \geq \blockheight_{min} \wedge x^* = x:\\
        \quad \ret u \\
        \text{else : } \ret \bot
      }
    \end{pcvstack}
    \end{titledbox}
    \end{adjustbox}
    \caption{Converter $\pi_{P}^{\idealrandom}$ for emulating interface of $P$, converter $\pi_{T}^{\idealrandom}$ for emulating interface of $T$, and converter $\pi_{V}^{\idealrandom}$ for emulating interface of $V$ of $\idealrandom$}
    \label{fig:protocol_converter_all}
\end{figure}

\subsection{Protocol Summary}
\label{sec:protocol_summary}
The protocol can be summarized as follows: 
\begin{itemize}
    \item Step 0: Verifier $V$ and client $T$ optionally agree on the input $x$ and context $\context$ used in the session. Otherwise, client $T$ picks an input $x$ and a context $\context$ itself. 
    
    \item Step 1: Client $T$ generates circuit $\vec C$ by deriving it from a combination of the latest block hash, input and the context, represented by $T(\cmdchallenge, \context, x)$.

    \item Step 2: Provider $P$ gets the derived circuit $C$, samples from it by evaluating $\vec C\ket{0^{n\cdot M}}$ and then measuring in the computational basis to obtain $\vec z$. $P$ then posts this output. The steps are represented by $P(\cmdresponse)$.

    \item Step 3: Verifier $V$ verifies four things -- no earlier transcript, consistency of circuit derived from block hash, $\vec z$ passes the $\xebtestbool$, and timing constraint. The steps are represented by $V(\cmdeval, \context, x)$.
\end{itemize}

\subsection{Security Analysis for $\idealrandom_{\mathcal{U}}^{\idealblockchain}$}
The following correctness theorem states that if all parties are honest, then the ideal resource $\idealrandom^{\idealblockchain}_{\mathcal{U}}$ where $\mathcal{D}:=\mathcal{U}$ is constructed.

\begin{theorem}
\label{theo:correctness}
    For the protocols $(\pi_{T},\pi_{P}),\pi_{V}$  defined in \cref{fig:protocol_converter_all}, we have that
    $$\pi_{T}\pi_{P}\pi_{V}(\idealblockchain||\idealcommchannel^{T\rightarrow P}||\idealcommchannel^{P\rightarrow V}||\idealseed_{V}) \approx \idealrandom_{\mathcal{U}}^{\idealblockchain}.$$
\end{theorem}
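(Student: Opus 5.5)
We argue by interface-by-interface comparison of the real system $\mathsf{REAL}:=\pi_{T}\pi_{P}\pi_{V}(\idealblockchain||\idealcommchannel^{T\rightarrow P}||\idealcommchannel^{P\rightarrow V}||\idealseed_{V})$ and the ideal resource $\mathsf{IDEAL}:=\idealrandom_{\mathcal{U}}^{\idealblockchain}$ in the all-honest setting. Since no party is corrupted, no simulator is required beyond forwarding the eavesdropper interface: $E(\cmdreceive)$ in the ideal world returns $(\mathbf{C},\mathbf{R})$, which is exactly what $E$ reads from $\idealcommchannel^{T\rightarrow P}$ and $\idealcommchannel^{P\rightarrow V}$ in the real world. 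We then consider an arbitrary distinguisher issuing a sequence of $\cmdchallenge$, $\cmdresponse$, $\cmdeval$, $\cmdread$ and blockchain queries. We show by a hybrid argument that the joint distribution of all outputs differs by at most $\epsilon_{\mathsf{total}}$, to be fixed from \cref{def:randomness_extraction}.

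\textbf{Step 1: syntactic identities.} First, we observe that $T(\cmdchallenge,\context,x)$ in both worlds reads the same latest block from the shared $\idealblockchain$ and derives the identical $\vec C$ via $\derive_{Haar(N)^M}$. The stored/transmitted value $\mathbf{C}$ is thus the same up to the bookkeeping fields $(\context,x)$, which the verifier only consumes through the on-chain posting. Second, $P(\cmdresponse)$ in the real converter samples $\vec z\gets\measure(\vec C\ket{0^{n\cdot M}})$. This has exactly the distribution $\mathcal{D}_{\vec C}$ used by the ideal honest interface, and both worlds post the same message to $\idealblockchain$ and obtain the same $(\blockheight_{\mathbf{R}},\transactionhash)$. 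Third, in $V(\cmdeval,\context,x)$ every check — recomputation of $\vec C$, the multi-attempt abort, $\xebtestbool$, the timing bound $\timeboundsec$, the context and $\blockheight_{min}$ checks — is literally the same predicate on the same blockchain state in both worlds. Hence the accept/abort event $\Omega$ has identical distribution. The only remaining difference is the returned value: $u\gets\extract(\vec z,seed_q)$ versus $u\sim\mathcal{U}$.

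\textbf{Step 2: replacing the extractor output.} We condition on $\Omega$. For honest sampling from Haar-like circuits, \cref{theo:smooth_min_entropy_bound_guarantee} (with the honest provider trivially meeting the query bound with $Q_{\vec C}=M$, and $\timeboundsec\cdot\timebound\le\timeboundtheo$) gives $H^{\epsilon_s}_{\operatorname{min}}(Z|\tilde I)\ge Q_{\operatorname{min}}(n-1)+\log\epsilon_s$. The seed from $\idealseed_V$ is uniform and independent of everything the distinguisher sees, since only $V$ can read it. Taking $E$ to be the distinguisher's view (blockchain contents, $\vec C$, $\vec z$, which are public), the strong two-source extractor of \cref{def:randomness_extraction}, applied with the seed as $X_1$ and $\vec z$ as $X_2$, yields $\norm{\rho_{\extract(\vec z,seed)\,\vec z\,E}-\tau_\ell\otimes\rho_{\vec zE}}_1\le 6\epsilon_s+2\epsilon_{ext}+2\epsilon_2$. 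This is exactly the distance between the real and ideal outputs of one $\cmdeval$ call.

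\textbf{Main obstacle.} The delicate point is Step 2. On the one hand, $\vec z$ is public (posted on-chain), so the distinguisher holds $X_2$ itself; we need the strongness to hold with respect to the public source $\vec z$, and so the min-entropy requirement must be carried by the seed side, with $\vec z$ needing only smooth min-entropy relative to the remaining side information. We will check that the parameters of \cref{def:randomness_extraction} permit this assignment. On the other hand, repeated $\cmdeval$ calls reuse the same seed, so outputs in the real world are correlated. We expect to handle this by a hybrid over the at most $q$ evaluation queries, or by restricting to one-shot use as in the first instantiation of \cref{sec:trust_model_instant}. Summing the per-query bounds then gives $\approx_{q(6\epsilon_s+2\epsilon_{ext}+2\epsilon_2)}$, which establishes \cref{theo:correctness}.
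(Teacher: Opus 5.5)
Your route is the paper's: its proof of \cref{theo:correctness} is one sentence saying the claim follows from \cref{theo:smooth_min_entropy_bound_guarantee} and \cref{def:randomness_extraction} ``and the rest by relabeling and inspection''. That is your Step~1 followed by a single application of the extractor.

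\textbf{The gap.} What fails is your closing claim that a hybrid over the $q$ evaluation queries gives $\approx_{q(6\epsilon_s+2\epsilon_{ext}+2\epsilon_2)}$. In the real system, $u=\extract(\vec z,seed_q)$ is a deterministic function of the received response and of the one seed drawn when $\idealseed_V$ is initialised. Nothing in $\pi_V$ rejects a response merely because it was already evaluated; the multi-attempt check only looks at strictly earlier transactions. So a distinguisher that calls $V(\cmdeval,\context,x)$ twice without a new response sees the same $u$ twice whenever both calls accept. By contrast, $\idealrandom^{\idealblockchain}_{\mathcal{U}}$ draws a fresh $u\sim\mathcal{U}$ on each call. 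The distinguishing advantage is about $(1-2^{-\ell})$ times the probability that both calls accept. There is no per-query bound to sum here, because conditioned on the first output the second carries no fresh randomness.

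Consequently, the statement is only provable for one evaluation per response, or against an ideal resource that memoises $u$ per $(\blockheight_{\mathbf{R}},\transactionhash)$. Even for distinct responses sharing one seed, your hybrid must account for the seed losing up to $\ell$ bits of min-entropy per revealed output. You need something like $\ell_{seed}\geq\kappa_1+(q-1)\ell$, or a fresh seed per session. The paper's proof says nothing about repeated evaluation and is implicitly one-shot. You should state that restriction explicitly rather than sum the bounds.

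\textbf{Two smaller repairs.} First, in Step~2 you put $\vec z$ (and its on-chain copy) into the side-information register $E$. Then $H^{\epsilon_s}_{\operatorname{min}}(\vec z|E)=0$ and \cref{def:randomness_extraction} gives nothing. Strongness in $\vec z$ is already supplied by the $X_i$ register in $\rho_{\extract(X_1,X_2)X_iE}$. So $E$ must be the rest of the view (block data, $\context$, $x$, $\vec C$, as in $\tilde I$), exactly as your obstacle paragraph says; make Step~2 consistent with it.

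Second, in Step~1, with the converters as literally written, $\pi_T$ sets $\mathbf{C}:=(\vec C,\blockheight_{\mathbf{C}})$. Then neither the channel content returned at $E$ nor the posted message $(\mathbf{C},\vec z)$ equals its ideal counterpart, and $\pi_V$ cannot recover $(\context^*,x^*)$ from the chain at all. Your claim that both worlds post the same message holds only after amending $\pi_T$ to put $(\context,x)$ into $\mathbf{C}$. That amendment is the relabeling the paper appeals to, but it should be stated rather than dismissed as bookkeeping.
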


\begin{proof}
The correctness of the proposed construction follows directly from the entropy guarantee from \cref{theo:smooth_min_entropy_bound_guarantee} and randomness extractor \cref{def:randomness_extraction}, and the rest by relabeling and inspection.
\end{proof}

We set $\mathcal{H}:= \{V\}$ and $\mathcal{C}:= \{T,P\}$ which is considered the main use case of our scheme.
We prove the main security theorem of our scheme stated as the theorem below where the verifier $V$ is honest and the provider $P$ together with the client $T$ are corrupted. We rely on composability theorem of CC framework to replace ideal resource with real resource.

\begin{theorem}
\label{theo:security}
    Consider an ideal resource $\idealblockchain$ from \cref{fig:ideal_blockchain}, an ideal resource $\idealcommchannel$ from \cref{def:communication_channel}, an ideal resource $\idealseed$ from \cref{fig:ideal_seed}, and an ideal resource $\vrs_{\mathcal{U}}^{\idealblockchain}$ from \cref{fig:ideal_random}, then there exists a negligible function $\epsilon$ such that

    $$\pi_V(\idealblockchain||\idealcommchannel^{T\rightarrow P}||\idealcommchannel^{P\rightarrow V}||\idealseed_{V})\approx \idealrandom^{\idealblockchain}_{\mathcal{U}}\sigma_{\{P,E,T\}}.$$
\end{theorem}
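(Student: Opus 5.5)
We construct an explicit simulator $\sigma_{\{P,E,T\}}$ and bound the real/ideal distinguishing advantage through a short sequence of hybrids. The simulator attaches to the corrupted interfaces $T,P$ and the eavesdropper interface $E$ of $\idealrandom^{\idealblockchain}_{\mathcal{U}}$. It emulates the two authenticated channels $\idealcommchannel^{T\rightarrow P}$ and $\idealcommchannel^{P\rightarrow V}$ by forwarding messages. When the distinguisher sends $\mathbf{C}'$ on $T$, the simulator calls $T(\cmdchallenge,\mathbf{C}')$; when it sends $\mathbf{R}'$ on $P$, the simulator calls $P(\cmdresponse,\mathbf{R}')$. It answers $E(\cmdreceive)$ with the pair $(\mathbf{C},\mathbf{R})$ exactly as the real channels would leak it. All blockchain interactions go directly to the shared $\idealblockchain$, which is identical in both worlds.

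The verifier's check logic in $\pi_V$ and in $V(\cmdeval,\context,x)$ of the ideal resource is syntactically the same. This covers the re-derivation of $\vec C$ from the block hash, the earliest-transcript check, the timing bound $\blockheight_{\mathbf{R}}-\blockheight^*_{\mathbf{C}}\le\timeboundsec$, $\blockheight^*_{\mathbf{C}}\ge\blockheight_{min}$, the context and input match, and $\xebtestbool$. Hence the abort events coincide in both worlds, and the only difference is the returned value: $u=\extract(\vec z,seed_q)$ in the real world versus a fresh $u\sim\mathcal{U}$ in the ideal world.

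The hybrids run as follows. $\hybrid_0$ is the real world. In $\hybrid_1$, the circuits derived via $\derive_{Haar(N)^M}(\hash(\hash(\widehat B,\aux)\|\context\|x))$ are replaced by truly Haar-random circuits sampled after the block is mined. This step is justified by three ingredients:
\begin{itemize}
\item the block-hash computational min-entropy $\bcentropybound$ of $\idealblockchain$,
\item modelling $\hash$ as a random oracle or extractor,
\item the pseudorandomness of $\derive$ \cite{Metger2024, Brando2016}.
\end{itemize}
Any distinguisher for this step breaks one of these, giving a negligible loss. Because the check $\blockheight^*_{\mathbf{C}}\ge\blockheight_{min}$ and the domain separator $\context$ are enforced, and the earliest on-chain response for $(\context,x)$ is the one used, the adversary cannot precompute, reuse, or post-select among many attempts. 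The accepted transcript is therefore a single fresh run of the one-shot protocol of \cref{theo:smooth_min_entropy_bound_guarantee}. In addition, the timing check with $\timeboundsec\cdot\timebound\le\timeboundtheo$ places the corrupted $(T,P)$ inside the query-bounded adversary model of that theorem under \cref{assum:rcs_unstructure}.

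In $\hybrid_2$, conditioned on non-abort $\Omega$, we invoke \cref{theo:smooth_min_entropy_bound_guarantee} to obtain $H^{\epsilon_s}_{\min}(Z\mid\tilde I)\ge Q_{\min}(n-1)+\log\epsilon_s$, where $\tilde I$ contains everything the distinguisher sees, in particular the seed and the chain. Since $seed_q$ from $\idealseed_{V}$ is uniform, private, and independent of $Z$ (it is sampled at initialisation and never exposed to $T,P,E$), \cref{def:randomness_extraction} gives
$$\norm{\rho_{\extract(Z,seed_q)\,Z\,E}-\tau_\ell\otimes\rho_{ZE}}_1\le 6\epsilon_s+2\epsilon_{ext}+2\epsilon_2.$$
The strong variant is essential, because the adversary may hold quantum side information and the transcript $\vec z$ is public. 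This bound lets us replace $u$ by a uniform sample, which is exactly the ideal world. The total advantage is the sum of the hybrid losses plus the weight of the bad event $\Pr[\Omega]\le\epsilon_{\mathsf{adv}}$, in which the test passes with low entropy; choosing parameters makes this negligible.

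The main obstacle is the step to $\hybrid_2$. Applying \cref{theo:smooth_min_entropy_bound_guarantee} requires justifying that the adversary controlling both $T$ and $P$ behaves like the oracle-bounded adversary of that theorem. Concretely, the block-derived circuits must be effectively Haar and unpredictable before the block is mined, and the on-chain earliest-response rule must defeat grinding over $x$ and $\context$. The rule only stops grinding if $\context$ and $x$ are fixed by the verifier's $V(\cmdeval,\context,x)$ call, so the plan relies on that. It must also be handled that the bound holds conditioned on non-abort while the distinguisher also observes abort events; because abort is decided identically in both worlds, conditioning only costs the $\Pr[\Omega]$ term.
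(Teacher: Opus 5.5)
Your simulator and most of your skeleton match the paper's proof. The simulator is a pure forwarder, and the verifier's checks, hence its aborts, are identical in both worlds. The event that $\xebtestbool$ passes with fewer than $Q_{\min}$ honest quantum samples is charged as a negligible bad event (the paper's $\hybrid_2$). \cref{theo:smooth_min_entropy_bound_guarantee} together with the strong two-source extractor of \cref{def:randomness_extraction} then lets you replace $u=\extract(\vec z,seed_q)$ by a uniform sample (the paper's $\hybrid_3$).

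The step that does not go through as you justify it is your $\hybrid_1$. The seed $\hash(\hash(\widehat B,\aux)\|\context\|x)$ fed to $\derive_{Haar(N)^M}$ is public. The corrupted $T$ computes it itself, $\pi_V$ recomputes it from on-chain data, and $P$ receives $\vec C$ as an explicit circuit description. Pseudorandom-unitary security \cite{Metger2024} holds only for a secret key and only against black-box access to the unitary. Design-type guarantees \cite{Brando2016} likewise concern oracle access, not a party holding the description. A distinguisher who holds the seed and the circuit trivially separates a polynomial-size explicit circuit from a Haar unitary, which has no efficient description at all, so $T$ could not even send one or $\pi_V$ recompute it. Hence there is no reduction showing that a distinguisher for this step breaks the hash, $\derive$, or $\bcentropybound$. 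The block-hash entropy cannot supply Haar-ness. It only makes $\vec C$ unpredictable before the challenge block is published, which is what makes the bound $\timeboundsec\cdot\timebound\le\timeboundtheo$ meaningful.

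The fix is the paper's route: drop this hybrid and import the Haar premise of \cref{theo:smooth_min_entropy_bound_guarantee} through \cref{assum:rcs_unstructure} and the oracle-access model. In that model, the only useful access to $\vec C$ is via $\mathcal{O}_{\vec C}$ plus bounded classical spoofing. This is exactly what the paper does when it asserts that the scenario matches the hypotheses of that theorem.

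Your Haar replacement was also what removed cross-session circuit reuse for free. Once it is gone you need the paper's $\hybrid_1$ back: abort if two sessions with different $(\widehat B,\aux,\context,x)$ derive the same $\vec C$. The earliest-response check only catches a repeated $(\context,x)$ pair. A collision would let one $\vec z$ be accepted in two sessions, and since $seed_q$ is reused, it would yield two identical outputs $u$. That is immediately distinguishable from two independent ideal samples. With $\hash$ modelled as a random oracle, this abort costs only a negligible collision term.
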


\begin{proof}
 \textit{Sketch.} The full proof is provided in the \cref{sec:full_sec_proof}. We give the intuition of the proof. From the perspective of the distinguisher, the challenge circuit is generated using the randomness from the latest block hash and there is not enough time to spoof the output to reduce the min-entropy guarantee of the output string while still passing $\xebtestbool$. Similarly, the distinguisher also does not know at the time of response submission to the bulletin board, what seed will ultimately be used to derive the sample. Therefore, if the distinguisher does not have access to this ``seed'' randomness, it cannot distinguish the final random sample from an appropriately defined uniform distribution due to the statistical guarantee of the randomness extractor.   
\end{proof}

\section{Extending $\idealrandom_{\mathcal{U}}^{\idealblockchain}$ and Discussion}
\subsection{$\vrs_{\mathcal{D}}$: VRS for Arbitrary Distribution}
\label{sec:rejection_sampling_error}

Rejection sampling  \cite{von195113} can be used to sample a target distribution $f(x)$ given a source distribution $g(x)$ that is easier to sample from. For example, it is used in lattice-based signatures \cite{lyu12} to sample from a distribution that is independent of the secret. However, existing works require a known source distribution $g(x)$, whereas in our $\vrs$ instantiation, we may only be sampling from a distribution that is only $\epsilon-$close to the intended source distribution because of the property of the randomness extractor (\cref{def:randomness_extraction}). Therefore, we introduce and prove an error-version of the rejection sampling algorithm below.

\noindent\textbf{Rejection Sampling with Error. } Rejection sampling \cite{von195113} performs sampling from an arbitrary target distribution with known probability mass function $f(x)$ given a source distribution $g(x)$ by accepting samples drawn from $g(x)$ with probability $f(x)/(M\cdot g(x))$. Then, the final output distribution is indeed $f(x)$. 
However, in practice, the available source distribution may differ from the ideal: we may only have access to $\tilde g(x) := g(x)+\eta(x)$, where $\eta(x)$ represents the error. Let the support be defined as $\mathsf{supp}(g) := \{x \in \mathcal X : g(x) \neq 0\}.$
We give a rejection sampling using the approximate source distribution $\tilde g(x)$ and prove its properties with respect to a target distribution $\tilde f(x)$  (with error) as follows:

\begin{lemma}[Rejection Sampling with Error]
\label{lemma:rejection_sampling_error}
Let the ideal source PMF be $g$ and ideal target PMF be $f$. Assume that $f(x) \leq M\cdot g(x)$ for all $x$ and $\mathsf{supp}(\tilde g)\subseteq \mathsf{supp}(g)$.
Given an actual source PMF $\tilde g$ where $SD(g,\tilde g)= \epsilon$, the following algorithm, in expected steps at most $(\frac{M}{1-2M\epsilon})$, outputs a sample from a distribution whose PMF is $\tilde f(x)$ where $SD (f,\tilde f) \leq  \frac{2M\epsilon}{1-2M\epsilon}$ if $2M\epsilon < 1$:
\begin{enumerate}
    \item $X \sim \tilde g$.
    \item Output $X$ with probability $\frac{f(X)}{M\cdot g(X)}$. Otherwise, go to step (1).
\end{enumerate}
\label{def:rejection_sampling}
\end{lemma}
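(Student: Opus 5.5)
The plan is to analyse a single round of the loop exactly. Then the overall output law is the conditional law of $X$ given acceptance, and the number of rounds is geometric.

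\textbf{Setup.} Define $r(x) := \frac{f(x)}{M g(x)}$ for $x \in \mathsf{supp}(g)$.
\begin{itemize}
\item Since $f \le M g$, we have $r(x) \in [0,1]$. This also makes $r(x)$ a legitimate acceptance probability.
\item The assumption $f \le Mg$ forces $f(x)=0$ wherever $g(x)=0$.
\item The assumption $\mathsf{supp}(\tilde g)\subseteq \mathsf{supp}(g)$ guarantees that step (2) is only ever evaluated where $r$ is defined.
\end{itemize}
Write $\tilde g = g + \eta$. The definition of statistical distance in the preliminaries gives $\sum_x |\eta(x)| = 2\epsilon$.

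\textbf{Acceptance probability and running time.} In one round, the probability of drawing $x$ and accepting it is $\tilde g(x) r(x)$. So the per-round acceptance probability is
\[
a := \sum_x \tilde g(x) r(x) = \sum_x g(x) r(x) + \sum_x \eta(x) r(x) = \frac{1}{M} + \sum_x \eta(x) r(x),
\]
using $\sum_x g(x) r(x) = \frac1M \sum_x f(x) = \frac1M$.

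Since $0 \le r \le 1$, we get $\bigl|a - \tfrac1M\bigr| \le \sum_x |\eta(x)| = 2\epsilon$. Hence $a \ge \frac{1-2M\epsilon}{M} > 0$ under the hypothesis $2M\epsilon<1$.

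Rounds are independent, so the number of rounds is geometric with parameter $a$. Its expectation is $1/a \le \frac{M}{1-2M\epsilon}$, which is the claimed step bound.

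\textbf{Output distribution.} The output PMF is $\tilde f(x) = \tilde g(x) r(x)/a$, while $f(x) = M g(x) r(x)$. I would use the triangle inequality to split the difference into two pieces:
\[
\sum_x |\tilde f(x) - f(x)| \le \frac{1}{a}\sum_x |\eta(x)|\, r(x) + \sum_x g(x) r(x)\,\Bigl|\frac1a - M\Bigr| .
\]
\begin{itemize}
\item The first term is at most $2\epsilon/a$, again because $r \le 1$.
\item For the second term, $\sum_x g(x) r(x) = 1/M$ and $|1/a - M| = M\,|1/M - a|/a \le 2M\epsilon/a$. So the second term is also at most $2\epsilon/a$.
\end{itemize}
Therefore $\mathsf{SD}(f,\tilde f) \le \frac12\cdot\frac{4\epsilon}{a} \le \frac{2M\epsilon}{1-2M\epsilon}$, as claimed.

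\textbf{Main obstacle.} The delicate point is keeping the constants tight. The key is to always bound $\sum_x|\eta(x)|\,r(x)$ by $2\epsilon$ via $r\le 1$, rather than by something involving $f/g$, which could blow up. The second delicate point is the normalisation mismatch between $a$ and $1/M$, which is exactly what produces the factor $1/(1-2M\epsilon)$.
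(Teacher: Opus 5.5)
Your proposal is correct and follows essentially the same route as the paper. Both write $\tilde g = g+\eta$, compute the per-round acceptance probability as $\frac{1}{M}(1+\delta)$ with $\delta = M\sum_x \eta(x)\, r(x)$, and split $\tilde f - f$ into the $\eta$-term and the normalisation-mismatch term before applying the triangle inequality; this is exactly the paper's decomposition $\tilde f - f = \frac{f}{1+\delta}\bigl(\frac{\eta}{g} - \delta\bigr)$, with the same constants. Your explicit geometric-distribution argument for the expected step count and your use of $\mathsf{supp}(\tilde g)\subseteq\mathsf{supp}(g)$ to keep $r$ well defined make the write-up slightly more complete than the paper's.
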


\noindent The proof is given in \cref{sec:proof_rej}. We are now ready to lift $\vrs_{\mathcal{U}, \epsilon_u}$ to $\vrs_{\mathcal{D}, \epsilon_d}$ for any $\mathcal{D}$ where $\epsilon_u$ and $\epsilon_d$ are the distribution errors. We assume that the uniform samples suffice, which can be achieved by obtaining a longer uniform sample or repeating the protocol. Then, we apply \cref{lemma:rejection_sampling_error} to obtain a sample from the distribution $\mathcal{\tilde D}$ such that $SD(\mathcal{\tilde D,D})=\epsilon_d \leq \frac{2M\epsilon_{u}}{1-2M\epsilon_u}$. Let acceptance probability be $p := \Pr[accept]$. To ensure a negligible failure probability $(1-p)^\alpha \leq 2^{-\lambda}$, it is sufficient to set the repetition factor $\alpha > (1/p) \lambda \ln 2$.

\subsection{Different Corruption Patterns.}
Our $\vrs_{\mathcal{U}}^{\idealblockchain}$ is analyzed for when $T$ and $P$ are malicious and $V$ is honest. Assuming the provider is always malicious, we provide analysis for different corruption patterns below:
\begin{enumerate}
    \item (Honest $T,V$; Malicious $P$): $\vrs_{\mathcal{U}}^{\idealblockchain}$ also captures this setting since the honest strategy is contained within arbitrary malicious strategies. 
    \item (Honest $T$; Malicious $P,V$): Since the verifier is dishonest, we consider the case that there exists a third party who is interested in the existence of a sample $u$ given by the verifier $V$. In this case, $T$ is that third party. First, the verifier uses the future block hash as the extractor's second input to extract the final sample together with provider's response string $\vec z$. The third party (in this case $T$) simply re-runs the verification algorithm but with the second input replaced as mentioned above.
    \item (; Malicious $T,P,V$): Similar to the above, except the third party is neither $T,P,$ nor $V$.
\end{enumerate}

{
\setlength{\textfloatsep}{0pt}   
\setlength{\floatsep}{0pt}       
\setlength{\intextsep}{0pt}      
\begin{table}[ht]
\rowcolors{1}{}{lightgray}
\begin{adjustbox}{minipage=\linewidth,scale=1}
\centering
\begin{tabular}{r|r|r|r}
  \hline
  & Client $T$ & Provider $P$ & Verifier $V$\\
  \hline
  Captured by $\vrs_{\mathcal{U}}^{\idealblockchain}$ & \Warning &  \Warning & \cmark \\
  Captured by $\vrs_{\mathcal{U}}^{\idealblockchain}$ & \cmark & \Warning & \cmark \\
  Item 2 Above & \cmark & \Warning & \Warning \\
  Item 3 Above & \Warning & \Warning & \Warning \\
   \hline
\end{tabular}
\end{adjustbox}
\caption{Corruption Patterns. Provider is always malicious. \cmark$\ $ is an honest party while \Warning$\ $ is a corrupted party.} 
\label{tab:corrupt}
\end{table}
}

\noindent\textbf{Further Discussion.} We defer discussion on the transition to fault-tolerant QC, performance, domain separation, and other topics to the Appendix.

\section{Conclusion}
In this work, we introduced verifiable random sampling (VRS), a new protocol which allows for publicly verifiable sampling from any target distribution in a composable manner. The protocol outputs information-theoretic randomness and is resistant to collusion. In addition, we prove security in the constructive cryptography (CC) model, ensuring composability. Building on experimentally demonstrated quantum certified randomness protocols, VRS is practical for deployment in today's blockchain environments. The versatility of VRS is demonstrated through several use cases, including smart contract randomness and e-lotteries, applicable to blockchain. We believe that our protocol addresses a growing need for high quality randomness that is both unbiasable and verifiable. 

While our model is built on top of RCS, we leave it to future work to adapt our construction by basing the entropy source on preimage sampling in an NP-search problem. This regime is more relevant for fault-tolerant quantum computers of the future, and using them can make our construction more efficient. Additionally, in this work we analyze the security of the model against a restricted class of adversaries. We believe this restriction can be lifted, and we leave this as an important direction for future work.

\section*{Acknowledgments}
We thank Pradeep Niroula for helpful discussions of random quantum circuit sampling and its applications. We thank Rob Otter for the executive support of the work and invaluable feedback on this project. 
The authors thank their colleagues at the Global Technology Applied Research center of JPMorganChase for their support.

\section*{Disclaimer}
This paper was prepared for informational purposes with contributions from the Global Technology Applied Research center of JPMorgan Chase \& Co. This paper is not a product of the Research Department of JPMorgan Chase \& Co. or its affiliates. Neither JPMorgan Chase \& Co. nor any of its affiliates makes any explicit or implied representation or warranty and none of them accept any liability in connection with this paper, including, without limitation, with respect to the completeness, accuracy, or reliability of the information contained herein and the potential legal, compliance, tax, or accounting effects thereof. This document is not intended as investment research or investment advice, or as a recommendation, offer, or solicitation for the purchase or sale of any security, financial instrument, financial product or service, or to be used in any way for evaluating the merits of participating in any transaction.
 
\bibliographystyle{plainurl}
\bibliography{main}

\appendix

\section{Deferred Discussion}
\label{sec:discussion}

\subsection{Reconciliating VRS specified in \cref{sec:vrs_cc} with VRS Overview from \cref{sec:vrs}}
\label{sec:vrs_with_cc_combined}

\textbf{Security Properties.} It can be seen that the correctness property informally stated in \cref{sec:vrs} is captured by \cref{theo:correctness}. The non-deterministic randomness property is captured by the fact that the $V$-interface from the ideal resource given in \cref{fig:ideal_random} always outputs a ``fresh'' sample from the intended distribution as output when given $x$ as input. For non-deterministic sample generation within the same block epoch and the same input $x$, we rely on the domain separation $\context$ where $\context$ is the domain separating string as discussed in \cref{sec:domain_seperation}.
The information theoretic randomness property is captured by the fact that the $V$-interface outputs random sample from the intended output distribution. 
\\
\noindent\textbf{Algorithm Mapping.} $\vrs.\gen$ is the setup algorithm that generates the necessary challenge seed and extractor seed. We assume the setup is done correctly and model the extractor seed as an ideal resource $\idealseed$ which specifies the exact parties that have access to the $\idealseed$ resource. We further discuss different setup instantiations for seed in \cref{sec:trust_model_instant}. The challenge seed $\idealseed_{chl}$ is derived using hash function on a blockchain hash, a context $\context$, and an input $x$. 
$\vrs.\cmdchallenge$, $\vrs.\cmdresponse$, and $\vrs.\cmdeval$ are modeled in the ideal resource $\idealrandom^{\idealblockchain}_{\mathcal{D}}$ given in \cref{fig:ideal_random} with some minor modifications. 

\subsection{Different Instantiation and Setup}
\label{sec:trust_model_instant}

\textbf{`Private' Seed Setup. } The seed (extractor's second input) is required to be unknown to the provider at the time of the response submission. We specifically model the seed as a verifier's private seed to achieve this. For use cases where there does not exist private storage for the verifier $V$, we can make use of the future unpredictable block hash as a second entropy source for the randomness extractor. We use this approach for smart contract use cases as described in \cref{sec:usecases}. If the verifier is composed of distributed entities, then we can make use of common MPC techniques such as distributed key generation (DKG) \cite{Gennaro2006} to derive a reconstructible shared common seed as described in \cref{sec:usecases}.

\subsection{Transitioning from Today's Quantum Computing Devices to Future Fault-Tolerant Quantum Computing Devices}
\label{sec:nisq_to_perfect}

Our $\VRS_\mathcal{U}^{\idealblockchain}$ is constructed using RCS-based challenge-response as the entropy source. RCS-based challenge-response works over today's noisy QC device as demonstrated in \cite{jpmc_cr,jpmc_cr_2} experimentally with complexity-theoretic foundation from \cite{aaronson2023certified}. However, this approach is time-sensitive whereby we have to bound the provider's response time and we have to set the difficulty of the problem to be not too hard as the verifier needs to compute the $\xebtestbool$ which is considered computationally expensive. 
Fortunately, our $\vrs$ construction can be made more efficient by basing the entropy source on sampling pre-image in a NP-search problem that is solvable by quantum polynomial-time machines but not
by classical probabilistic polynomial-time machines as studied in \cite{Yamakawa2024}. However, it requires operations that are not feasible on today's QC devices. Assuming there will be fault-tolerant quantum computing devices in the future, we can instantiate $\vrs$ based on the aforementioned pre-image sampling problem instead. Notably, the framework developed in this work can be modified from being RCS-based challenge-response to the pre-image search challenge-response.

\subsection{Domain Separation}
\label{sec:domain_seperation}
Domain separation is encouraged in hash functions where instead of computing $\hash(x)$, we compute $\hash(\context||x)$ instead for some context string $\context$ \cite{Bellare2020}. 
Similarly, we have that the $\vrs$ protocol also takes in as input a context $\context$ where $\context$ can represent client-verifier public keys, session ID, and many other use-cases-specific contexts. The $\context$ can be deterministic and has no entropy but it has to be unique for the context it operates in. For example, we can pass in the smart contract address as context to derive challenge circuits for a specific smart contract $V$. Context enables different challenge circuits to be derived deterministically for different client contexts $\context$ in the same block epoch.

\subsection{Instantiation Analysis}
Experiments on the RCS-based certified randomness protocol are reported in \cite{jpmc_cr, jpmc_cr_2}. In particular, \cite{jpmc_cr_2} derives concrete parameter settings and estimates the expected verification cost in GPU hours where the GPU hours are with respect to  Intel Data Center GPU Max with approximately $46$ TeraFLOPS theoretical peak performance. In that work, the effective latency is modeled as $T_M / n_{parallel}$, where $T_M$ denotes the time between the moment the circuit (or the measurement basis) is revealed and the moment the response is received considering the base unit $n_{parallel}=1$, and $n_{parallel}$ is the number of quantum-computing resources available to the prover. In a blockchain setting, $T_M$ is likely to remain relatively fixed, but the prover can increase $n_{parallel}$ to handle more challenge circuits simultaneously, thereby reducing the effective latency per unit circuit. If this effective latency becomes sufficiently small, the validation cost could drop below one GPU hour. Since validation is fully parallelizable, validation time on the order of minutes is achievable through straightforward parallelization.

\subsection{Advantage of VRS in the Malicious Client Setting}
\label{sec:vrs_advantage_over_vrf}
Let us work in a setting in which the client is rationally malicious, attempting to maximize value for themselves, and the provider is honest. Now consider a transformation of our VRS protocol which replaces the circuit generator and quantum server with a classical VRF. Within an epoch of time $T$, the output of this transformed protocol is precisely determined by the inputs submitted by the client. If a client submits the same input twice within this epoch of period $T$, the output is identical. Therefore, a client can perform a grinding attack by submitting a set $S$ of inputs and choosing $x_{\text{max}} \in S$ such that
$$\text{Utility}\left(\text{Output}(\text{x}_{\text{max}})\right) = \max\limits_{y \in S} \text{Utility}\left(\text{Output}(y)\right),$$
where $\text{Output}$ is a map defined as the VRF output for a given input, and $\text{Utility}$ is a well-defined map on possible client inputs during the epoch quantifying the utility of the output for the client. Note that $\text{Output}$ is well-defined for our protocol under replacement of the circuit generator and server with a classical VRF, by the determinism of a classical VRF. In our proposed VRS (with the circuit generator and quantum server as originally proposed), there is no well-defined analogue to $\text{Output}$ due to freshness. Freshness ensures a resubmission of the same input yields different outputs with high probability. Therefore, while the client can grind and obtain circuits a priori within an epoch of time $T$, they cannot effectively grind for optimal output as they can under replacement with a classical VRF.

\subsection{Others}
\label{sec:blockchain_fail}
\textbf{When Blockchain Fails.} We further provide analysis for when the assumption that blockchain provides entropy fails. We assume the blockchain hash contains enough computational entropy \cite{bitcoin_entropy} to argue about the pseudorandomness of the circuits generated. For the case that the blockchain entropy is slightly biasable, the block hash entropy is reduced by a few bits. Our scheme will remain secure under this few-bit biasability scenario. 

\textbf{Why not just use existing blockchain entropy? } 
Blockchain entropy can be interpreted as a source of randomness from a randomness beacon. The randomness is broadcast at a relatively fixed interval. $\vrs$ provides a way to verifiably ``expand'' the number of samples extractable from a single-source of randomness. Importantly, the samples extracted will be different for different parties $\context$ even though they rely on the same source of randomness without any extra entropy source from their end. Importantly, the $\vrs$ protocol converts any source of \textit{pseudo}-randomness into information-theoretic randomness.

\textbf{Provider's Authentication.} The provider might be rewarded based on the quality of the response it returns. Note that our scheme implicitly captures the authentication aspect for the response it submits by requiring the provider to submit to the blockchain through signing the response as a transaction. The verification algorithm can simply, in addition, check that the response is submitted by the provider it requested using the provider's public key.

\section{Deferred Preliminaries}

\begin{definition}[VRF]
\label{def:vrf}

Verifiable random function (VRF) is a tuple of algorithms $\VRF := (\VRFgen{\cdot}, \VRFeval{\cdot}, \VRFvfy{\cdot})$ defined as follows:
\begin{itemize}
 \item $(\SK,\PK) \gets \VRFgen{\secparam}$: On input a security parameter $\secparam$, outputs a secret key $\SK$ and public key $\PK$.
    \item $(y,\pi) \gets \VRFeval{\SK,x}$: On input a secret key $\SK$ and an input string $x \in \{0, 1\}^\ell$, outputs $y \in \{0, 1\}^n$ and a proof $\pi$.
    \item $b \gets \VRFvfy{\PK,x,y,\pi}$: On input a public key $\PK$, an input string $x$, an output string $y$, and a proof $\pi$, outputs a bit $b \in \{0, 1\}$,
\end{itemize}

VRF satisfies the following properties:

\begin{enumerate}
    \item \textbf{Correctness} - $\forall \lambda \in \mathbb{N}$, $\forall (\SK, \PK)$ in the image of $\VRFgen{1^\lambda}$, $\forall x$, and  $\forall (y, \pi)$ in the image of $\VRFeval{\SK, x}$, 
    \begin{linenomath}
    $$
        \VRFvfy{\PK, x, y, \pi} = 1.
    $$
    \end{linenomath}
    \item \textbf{Unique Provability} - $\forall \PK$ (not necessarily generated by $\VRFgen{}$, $\forall x \in \{0, 1\}^L$, $\forall y_0, y_1 \in S$, and all possible proofs $\pi_0$, $\pi_1$, we have 
    \begin{linenomath}
    $$
        (\VRFvfy{\PK, x, y_0, \pi_0} = \VRFvfy{\PK, x, y_1, \pi_1} = 1)  \implies y_0 = y_1.
    $$
    \end{linenomath}
    In other words, $\forall x \in \{0, 1\}^L$, $\exists$ a valid proof for at most one $y$ value.
    \item \textbf{Pseudorandomness} -
    $\forall \operatorname{PPT}$ adversaries $\mathcal{A} = (\mathcal{A}_1, \mathcal{A}_2)$, we have 
    \begin{linenomath}
    $$
        \operatorname{Adv}^{\operatorname{vrf}}_{\mathcal{A}}(\lambda) \coloneqq \left\vert \operatorname{Pr}\left[\operatorname{Exp}^{\operatorname{vrf}}_{\mathcal{A}}(\lambda)=1\right] - \frac{1}{2} \right\vert \leq \operatorname{negl}(\lambda).
    $$
    \end{linenomath}
\end{enumerate}

\begin{algorithm}
    \caption{The VRF Security Experiment : $\operatorname{Exp}^{\operatorname{vrf}}_{\mathcal{A}}(\lambda)$}
    \begin{algorithmic}[1]
        \State $b \gets^{\$} \{0,1\}$
        \State $(pk, sk) \gets^{\$} \mathrm{VRF.Gen}(1^\lambda)$
        \State $(x^\star, \text{state}) \gets \mathcal{A}_1^{\mathcal{O}_{\mathrm{eval}}(\cdot)}(pk)$
        \State $(y_0, \pi) \gets \mathrm{VRF.Eval}(sk, x^\star)$
        \State $y_1 \gets_R S$
        \State $b' \gets \mathcal{A}_2^{\mathcal{O}_{\mathrm{eval}}(\cdot)}(\text{state}, y_b)$
        \State \Return $b == b'$
    \end{algorithmic}
    \label{alg:vrf_sec_exp}
\end{algorithm}

\begin{algorithm}
    \caption{The VRF Evaluation Oracle : $\mathcal{O}_{\mathrm{eval}}(x)$}
    \begin{algorithmic}[1]
    \If{$x = x^\star$}
        \State \Return $\bot$
    \EndIf
    \State $(y, \pi) \gets \mathrm{VRF.Eval}(sk, x)$
    \Return $(y, \pi)$
    \end{algorithmic}
    \label{alg:vrf_eval_oracle}
\end{algorithm}
\end{definition}

\section{Complexity of RCS}
\label{sec:hardness}

We first recall the definition for (Mixed) Linear Cross-Entropy Benchmark $\mlxeb_{b,k}(\mathcal{D})$ metric from \cite{jpmc_cr} for some distribution $\mathcal{D}$.
\begin{definition}[$\mlxeb_{b,k}(\mathcal{D})$ \cite{jpmc_cr}]
\label{def:mlxeb}
Let $\mathcal{D}$ be a probability distribution over quantum circuits on n qubits. Given $\vec C := (C_1, \dots, C_k)$ drawn from $\mathcal{D}^k$, output sample $\vec z =(z_1,\dots,z_k) \in (\{0,1\}^n)^k$ such that 
$$\frac{1}{k} \sum^{k}_{i=1} p_{C_i}(z_i) \geq \frac{b}{N}.$$
\end{definition}
We usually consider the case that $\mathcal{D} \sim Haar(N)$.

We now recall the complexity class needed for the paper.
\begin{definition}[Quantum classical Arthur-Merlin ($\qcam$)]
Let $x$ be a statement and $L$ be a language.
The complexity class $\qcam$ consist of languages $L$ for which there exists a  quantum polynomial-time algorithm $V$ (Arthur) and a polynomial $p$ such that the following hold:
\begin{itemize}
    \item If $x\in L$, then there exists a polynomial $q$ and a polynomial-length classical string $w \in \{0,1\}^{q|x|}$ such that $\Pr_{r\in \{0,1\}^{p|x|}}[V(x,r,w) \ accepts] \geq 2/3$.
    \item If $x\notin L$, then for every a polynomial $q$ and every polynomial-length classical string $w \in \{0,1\}^{q|x|}$ such that $\Pr_{r\in \{0,1\}^{p|x|}}[V(x,r,w) \ accepts] \leq 1/3$.
\end{itemize}
\end{definition}
$\qcamtime(T)/q(A)$ is the generalization of $\qcam$ whereby the verifier can use running time of $T$ and receives $A$ bits of quantum advice that only depends on $n$.

The following problem that is conjectured to be hard in \cite{aaronson2023certified}. The same assumption is used to demonstrate quantum supremacy \cite{Preskill2012} through random circuit sampling (RCS) in the original Google quantum supremacy experiment \cite{Arute2019}.
\begin{definition}[Long List Quantum Supremacy Verification ($\llqsv$)]
Given a list of $M=O(N^{3})$ circuit-string tuples $\{(C_i, s_i) : i \in [M]\}$ such that $C_i \sim \mathcal{D}$, distinguish the cases:
\begin{itemize}
    \item Yes-case: for each $i \in [M]$, $s_i$ is sampled from $C_i$.
    \item No-case: for each $i \in [M]$, $s_i$ is sampled uniformly at random and thus is independent of $C_i$.
\end{itemize}
\end{definition}

In addition, it is proved that the following assumption hold in the random oracle model, i.e. $\llqsv(\mathcal{D}) \notin \qcamtime(2^Bn^{O(1)})/q(2^{B}n^{O(1)})$.
\begin{assumption}[Long List Hardness Assumption $(\llha_{B}(\mathcal{D})$ \cite{aaronson2023certified} ] There exists no $\qcam$ protocol for which quantum Arthur solves $\llqsv$ in time $2^Bn^{O(1)}$ given access to a quantum advice of length $2^Bn^{O(1)}$.
\end{assumption}

\begin{theorem}[\cite{aaronson2023certified} Theorem 5.21 ]
\label{theo:llqsv_hardness}
Relative to a random oracle, $\llqsv \notin \qcamtime(c^n)/q(c^n)$ for some constant $1<c<2$.
\end{theorem}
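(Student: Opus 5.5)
The plan is to prove \cref{theo:llqsv_hardness} as an oracle query lower bound. I would fix a concrete oracle instantiation of the circuit family $\mathcal{D}$. Each ``circuit'' $C_i$ becomes a pointer into a random oracle $\mathcal{O}$, and $\mathcal{O}$ defines a random Boolean function $f_i:\{0,1\}^n\to\{0,1\}$ from which a state $\ket{\psi_i}\propto\sum_z (-1)^{f_i(z)}\ket{z}$ is built, so that $p_{C_i}(z)=|\braket{z|\psi_i}|^2$. More generally, $\mathcal{O}$ could supply Gaussian amplitudes, so that the induced distributions $p_{C_i}$ are Porter--Thomas-like.

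In the Yes-case each $s_i$ is drawn from $p_{C_i}$; in the No-case each $s_i$ is uniform. The goal is then the following: no $\qcamtime(c^n)$ Arthur with $c^n$ qubits of quantum advice and a classical Merlin witness can distinguish the two cases with constant bias for $M=O(N^3)$ tuples, with probability $1$ over the random oracle.

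The key steps, in order, are as follows.

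First, I would \textbf{eliminate the advice}. Using the standard ``BQP/qpoly $\subseteq$ QMA/poly''-type argument (Aaronson), quantum advice of length $A$ can be replaced by classical advice together with a Merlin-supplied state that Arthur checks. Combined with Merlin's classical witness, this reduces the task to a non-uniform, witness-assisted query problem. Its cost is $T'=\mathrm{poly}(T,A)=c'^{\,n}$ queries with $W=c'^{\,n}$ bits of classical help that may depend on $\mathcal{O}$ only through a fixed string.

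Second, I would give a \textbf{single-instance lower bound}. For one pair $(C_i,s_i)$, I would bound the distinguishing advantage of $q$ queries to $f_i$ between ``$s_i$ biased toward heavy outputs'' and ``$s_i$ uniform''. The bias is only a factor $\approx 2$ on a single coordinate of a random function. I would use a hybrid/adversary argument (BBBV-style) to show the advantage is at most $O(q^2/N)$, since only queries near $s_i$ carry information.

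Third, I would \textbf{amplify over the long list}. With $M=\Theta(N^3)$ independent instances, the total query budget $c'^{\,n}\ll N^{3/2}$ per instance in aggregate means the algorithm can afford informative queries on only a vanishing fraction of instances. I would then handle the help string by a counting (union-bound) argument. A fixed $W$-bit string can ``pre-solve'' at most about $W/n$ instances. For any fixed advice/witness, the remaining instances are fresh random functions, independent of that string, so the union bound over $2^W$ strings costs a factor $2^{W}$. This must be beaten by a concentration bound $e^{-\Omega(M)}$ on the success probability, which holds because $M=N^3 \gg c'^{\,n}$ when $c<2$.

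Fourth, I would \textbf{diagonalize against the oracle}. A Borel--Cantelli argument over all machines and input lengths turns the per-machine probability bounds into a probability-$1$ separation relative to $\mathcal{O}$.

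The main obstacle is step three. The difficulty is combining the per-instance query bound with the Merlin witness, which can depend on the whole oracle. The counting argument must show that the witness carries at most $c'^{\,n}$ bits about $N^3$ independent instances, and that without witness-targeted information each unqueried instance contributes negligible bias. I would first try a direct-product/strong concentration theorem for the quantum query bound. If that fails, I would fall back on Aaronson--Hung's route of bounding via a reduction from the advantage of a $\qcam$ protocol to a small-list $\mathsf{XHOG}$-type distinguishing bound.
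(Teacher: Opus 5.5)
The paper gives no proof of this statement. It only cites \cite{aaronson2023certified}, so I judge your plan on its own terms. It has two genuine gaps.

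The first gap is the oracle model, and with it the per-instance bound.
\begin{itemize}
\item As written, $\ket{\psi_i}\propto\sum_z(-1)^{f_i(z)}\ket{z}$ gives $p_{C_i}(z)=1/N$ for every $z$. The Yes and No distributions then coincide, so the statement becomes trivially true and says nothing about the intended problem.
\item Your alternative, an oracle supplying Gaussian amplitudes pointwise, makes the problem easy. One query at the \emph{given} location $s_i$ returns essentially $Np_{C_i}(s_i)$. Its law under Yes is the size-biased version of its law under No, at statistical distance $1/e$ for Porter--Thomas.
\item Your heuristic that ``only queries near $s_i$ carry information'' describes exactly the situation in which hardness fails. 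BBBV-type hybrid bounds control queries to an \emph{unknown} location, but here $s_i$ is part of the input.
\end{itemize}
A meaningful model must make $p_C(s)$ a global function of the oracle. One example is Fourier sampling: $C_f=H^{\otimes n}U_fH^{\otimes n}$ with $p_f(z)=\hat f(z)^2$, where $\hat f(z)=N^{-1}\sum_x(-1)^{f(x)+x\cdot z}$. There, given the planted imbalance, each bit $f(x)\oplus x\cdot s_i$ is only $\pm\Theta(N^{-1/2})$-biased, with random sign. So an $O(q^2/N)$ bound has to come from a second-order, approximate-counting-type argument, not from locality.

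There is also a sanity check your argument fails. Whenever Arthur can apply $C_i$, amplitude estimation with $O(\sqrt N)$ applications estimates $Np_{C_i}(s_i)$ to constant accuracy. Hence $\llqsv$ is solvable in quantum time $2^{n/2}\mathrm{poly}(n)$, and the theorem can hold only for $c\lesssim\sqrt2$. An argument that goes through ``because $M=N^3\gg (c')^n$ when $c<2$'' therefore proves too much.

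The second gap is step three, which you rightly flag as the crux, but which fails as described.
\begin{itemize}
\item For fixed coins and a fixed witness, Arthur is an algorithm making at most $(c')^n\ll M$ queries. The gap between its acceptance probabilities on random Yes and No lists is an \emph{additive} quantity, at best exponentially small in $n$ (like $T^2/N$).
\item No event relevant to such an Arthur has probability $e^{-\Omega(M)}$. An additive bound cannot absorb a union bound over $2^{(c')^n}$ witnesses.
\item Soundness cannot be boosted so that each fixed $w$ is accepted on at most a $2^{-W}$ fraction of No lists. Repetition on the same list only reduces Arthur's internal error, and parallel repetition over fresh coins lengthens $w$ proportionally.
\item The long list works against you. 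Since $w$ is chosen after the list is fixed, Merlin can point Arthur at whichever of the $N^3$ entries are most favourable. List length is something the lower bound must survive, not a source of concentration.
\item The Aaronson--Drucker step replaces advice by an untrusted \emph{quantum} witness. That witness cannot be enumerated as a string at all.
\end{itemize}
Removing the $\exists w$ quantifier and the advice needs a genuinely different mechanism. For instance, one losing only a multiplicative $2^{O(W+A)}$ factor, paired with a per-instance bound strong enough to absorb it. The proposal supplies no such mechanism, and the fallback to ``Aaronson--Hung's route'' defers exactly this step.
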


We now state the result from \cite{jpmc_cr} that says passing $\mlxeb$ with low von Neumann entropy means solving $\llqsv$, in other words, $\llqsv_B(\mathcal{D}) \in \qcamtime(2^Bn^{O(1)})/O(n)$. 

\begin{theorem}[Passing MLXEB test with low entropy solves $\llqsv$(Theorem 8, \cite{jpmc_cr})]
There exists a quantum-classical Arthur-Merlin protocol which on input an $O(n)$-bit advice string solves $\llqsv_B(\mathcal{D})$ which means $\llqsv_B{(\mathcal{D})}\in \qcamtime(2^Bn^{O(1)})/O(n)$, if there exists a device $\mathcal{A}$ which runs in polynomial time and satisfies the following:
\begin{itemize}
    \item  $\mathcal{A}$ solves $\mlxeb$ with probability $q=\Pr_{\vec C \sim \mathcal{D}^k, \vec z \sim \mathcal{A}(\vec C)}\left[\sum_{i=1}^k p_{C_i}(z_i) \geq \frac{bk}{N}\right]$,
    \item $H(Z | \vec C )_{\mathcal{A}} < \frac{B}{2} \left(\frac{bq-1-\epsilon}{b-1}\right)$ where $\epsilon = n^{-O(1)}$.
\end{itemize}
\end{theorem}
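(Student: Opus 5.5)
The statement immediately above is Theorem 8 of \cite{jpmc_cr}, which generalizes the argument of \cite{aaronson2023certified}. My plan is to reconstruct the Aaronson--Hung reduction. From a low-entropy $\mlxeb$ solver $\mathcal{A}$ I will build a $\qcam$ protocol for $\llqsv_B(\mathcal{D})$ in which Merlin points to instances where $\mathcal{A}$ ``hits'' the given string, and Arthur checks those hits by sampling. The basic statistic is the hit probability of a pair $(C,s)$, namely $q_{C}(s):=\Pr[\mathcal{A}(C)=s]$. Here $C$ is embedded into a $k$-tuple, with the other $k-1$ coordinates drawn fresh from $\mathcal{D}$ by Arthur, and $q_C(s)$ is the marginal on that coordinate. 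The two cases separate in expectation:
\begin{itemize}
\item in the Yes-case, $\mathbb{E}_{s\sim p_C}[q_C(s)]=\sum_z q_C(z)p_C(z)$, which the $\mlxeb$ guarantee (with success probability $q$) pushes to roughly $\frac{1}{N}\bigl(1+(b-1)\cdot\frac{bq-1}{b-1}\bigr)$;
\item in the No-case, $\mathbb{E}_{s\sim\mathcal{U}}[q_C(s)]=1/N$ exactly.
\end{itemize}

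The gap is only a constant factor on a quantity of order $1/N$, so estimating it needs the two extra ingredients, the long list and the entropy bound. First, I would convert the bound $H(Z|\vec C)_{\mathcal{A}}<\frac{B}{2}\cdot\frac{bq-1-\epsilon}{b-1}$ into a statement about ``heavy'' outputs. By a Markov/averaging argument on $-\log q_C(Z)$, a substantial fraction of $\mathcal{A}$'s output mass must lie on strings with $q_C(z)\ge 2^{-B}$. Combined with the XEB excess, this means the \emph{heavy} part of $\mathcal{A}$'s output still carries an excess correlation with $p_C$ of order $(bq-1-\epsilon)/(b-1)$. The constant $\tfrac12$ in the entropy threshold is what leaves enough heavy mass after this split, and I expect that bookkeeping to be the delicate but routine part.

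Second, the protocol works as follows. Merlin sends a set $S\subseteq[M]$ of indices, claiming that $q_{C_i}(s_i)\ge 2^{-B}$ for each $i\in S$. Arthur spot-checks random members of $S$: he runs $\mathcal{A}$ about $2^B n^{O(1)}$ times and estimates $q_{C_i}(s_i)$. He accepts if $|S|$ exceeds a threshold $\theta M/N$. The threshold is computable from $b,q,\epsilon$ and $n$, and it is supplied as the $O(n)$-bit advice. In the Yes-case, by the first step, the expected number of genuinely heavy hits is larger by the gap factor than in the No-case. In the No-case, each $s_i$ is independent of $C_i$, so $\Pr[q_{C_i}(s_i)\ge 2^{-B}]\le 1/N$ times the number of heavy strings. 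Since $M=O(N^3)$, Chernoff bounds make the two counts separate with high probability. Arthur's running time is $2^Bn^{O(1)}$ per check, which gives $\llqsv_B(\mathcal{D})\in\qcamtime(2^Bn^{O(1)})/O(n)$.

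The main obstacle will be the No-case soundness, since a cheating Merlin may include indices that are only borderline heavy. I would handle this with a gap version of the test, accepting an index only if its estimate is at least $2^{-B}/2$. I would then charge borderline indices to strings with $q\ge 2^{-B-1}$, which at most doubles the No-case count. The Yes-case gap must exceed this factor, and I expect that to be exactly where the requirement $\epsilon=n^{-O(1)}$ and the specific form of the entropy threshold are used.
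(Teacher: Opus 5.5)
The paper does not prove this statement; it imports it from \cite{jpmc_cr}. Your reconstruction therefore has to stand on its own, and it has a genuine gap at its core.

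\textbf{Unweighted versus weighted counting.} Your protocol counts \emph{unweighted} heavy hits $|\{i : q_{C_i}(s_i)\ge 2^{-B}\}|$. With $T_C=\{z : q_C(z)\ge 2^{-B}\}$, the per-index expectations are $\mathbb{E}_C[p_C(T_C)]$ (Yes) and $\mathbb{E}_C[|T_C|]/N$ (No). Your first step, however, only controls the \emph{$q$-weighted} hit mass, $\sum_{z\in T_C} q_C(z)\,Np_C(z)$ versus $\sum_{z\in T_C} q_C(z)$, and the unweighted gap does not follow from the weighted one. Take $k=1$, $b=4$, and a device that outputs a string $z^*_C$ with $Np_C(z^*_C)\approx 4$ with probability $1/2$, and otherwise a uniform element of a set of $2^{B/4}$ strings with $Np_C(z)\le 1/2$. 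Then $q=1/2$, the entropy $1+B/8$ is below $\frac{B}{2}\cdot\frac{1-\epsilon}{3}$ for large $B$, all mass is heavy, and the weighted hit mass is about $2$ versus $1$. Yet the Yes-case heavy-hit count is about \emph{half} the No-case count. So Arthur must weight each index by an estimate of its hit probability, e.g.\ lower-bound the size of $\{(i,t) : t\le R\,\hat q_{C_i}(s_i)\}$.

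\textbf{The heavy/light split.} This is not routine bookkeeping if you start from the averaged bound $\mathbb{E}[Np_C(Z)]\gtrsim bq$, because nothing bounds the light part's XEB contribution by $b$ times its mass. What works is to keep the success event. Let $L$ be the light mass of the \emph{joint} output $\vec Z$ for a fixed $\vec C$. The $k-1$ fillers must be fixed per index, e.g.\ taken from other list entries; redrawing them on every run replaces $H(Z\mid\vec C)$ by $H(Z_j\mid C_j)$, which the hypothesis does not bound. Markov on the surprisal gives $\mathbb{E}L\le H(Z\mid\vec C)/B$. Moreover $\Pr[\text{success}\wedge\text{heavy}]\ge q-\mathbb{E}L$, and $\frac1k\sum_j Np_{C_j}(z_j)\ge b$ on that event. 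So with a uniformly random embedding position, the weighted heavy-hit mass is at least $b(q-\mathbb{E}L)$ in the Yes case versus exactly $1-\mathbb{E}L$ in the No case (in units of $1/N$). The resulting margin $(bq-1)-(b-1)\mathbb{E}L>\frac12(bq-1+\epsilon)$ is exactly what the threshold $\frac{B}{2}\cdot\frac{bq-1-\epsilon}{b-1}$ delivers.

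\textbf{Two further steps fail as written.} First, Merlin's witness is an explicit set of $\theta M/N$ indices, i.e.\ $\Theta(N^2)$ of them for $M=\Theta(N^3)$. Arthur cannot even read this in time $2^Bn^{O(1)}$. He has to move first with a random hash (a Goldwasser--Sipser set lower bound, or a hash-defined subsample of the list), so that Merlin exhibits only $\mathrm{poly}(n)$ indices. Second, the available Yes/No ratio is only about $b(q-\mathbb{E}L)/(1-\mathbb{E}L)$, which can be $1+O(bq-1)$, far below $2$. Your borderline treatment loses a factor of two in the No-case count, so it is not absorbed by the gap; it destroys it. Borderline weights must be handled to within a $1\pm n^{-O(1)}$ factor using the $2^Bn^{O(1)}$ repetitions. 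An $n^{-O(1)}$ slack like $\epsilon$ can absorb errors of that size, not a factor of two.
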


As a direct corollary, we have the following theorem that bound the entropy obtained assuming $\llha$ hardness assumption hold.
\begin{theorem}[Entropy Guarantee, Theorem 6 \cite{jpmc_cr}]
\label{theo:entropy_guarantee_unused}
Assuming that $\llha_{B}(\mathcal{D})$ for distribution $\mathcal{D}$ over quantum circuits acting on $n$ qubits, given any device that on input takes $k$ independently sampled circuits $\vec C \sim \mathcal{D}^k$ outputs a classical string $Z:=(z_1||\dots||z_k)\in {\{0,1\}}^{nk}$ solving $\mlxeb_{b,k}$ with probability q, it holds that $$H(Z | \vec C) \geq \frac{B}{2}\left(\frac{bq-1}{b-1}-n^{-\omega(1)}\right).$$
\end{theorem}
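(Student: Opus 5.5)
The plan is to prove the statement by contraposition, reducing it to the preceding theorem (Theorem 8 of \cite{jpmc_cr}) and then invoking $\llha_B(\mathcal{D})$. Fix a device $\mathcal{A}$ that takes $\vec C \sim \mathcal{D}^k$, outputs $Z=(z_1\|\dots\|z_k)$, and solves $\mlxeb_{b,k}$ with probability $q$. As in the hypotheses of Theorem 8, I will take $\mathcal{A}$ to run in polynomial time; this is the implicit regime of the statement.

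The first step is to show that, for every inverse-polynomial $\epsilon = n^{-O(1)}$,
$$H(Z\mid \vec C)_{\mathcal{A}} \;\geq\; \frac{B}{2}\left(\frac{bq-1-\epsilon}{b-1}\right).$$
Suppose this failed for some such $\epsilon$. Then $\mathcal{A}$ satisfies both bullet conditions of Theorem 8. That theorem yields a quantum-classical Arthur--Merlin protocol solving $\llqsv_B(\mathcal{D})$ in time $2^B n^{O(1)}$ with only an $O(n)$-bit classical advice string.

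The second step is to observe that $O(n)$ bits of classical advice can be encoded as $O(n)$ qubits of computational-basis quantum advice. Since $O(n) \le 2^B n^{O(1)}$, this puts $\llqsv_B(\mathcal{D})$ in $\qcamtime(2^B n^{O(1)})/q(2^B n^{O(1)})$. That is exactly what $\llha_B(\mathcal{D})$ forbids, so the displayed inequality holds for every inverse-polynomial $\epsilon$.

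The third step handles the mismatch between the inverse-polynomial slack $\epsilon/(b-1)$ in Theorem 8 and the $n^{-\omega(1)}$ slack in the statement. Because the inequality holds for every $\epsilon = n^{-c}$, letting $c \to \infty$ gives
$$H(Z\mid\vec C) \ge \frac{B}{2}\cdot\frac{bq-1}{b-1}.$$
This is at least the claimed bound $\frac{B}{2}\left(\frac{bq-1}{b-1} - n^{-\omega(1)}\right)$, since the subtracted term is nonnegative.

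I expect this last step to be the main obstacle. A naive attempt runs the contrapositive directly with $\epsilon = (b-1)n^{-\omega(1)}$, but that $\epsilon$ is not inverse-polynomial, so Theorem 8 cannot be applied to it. The care needed is to take $\epsilon$ over the inverse-polynomial family and check that the hidden constants in $\qcamtime(2^B n^{O(1)})$ stay compatible with the form of $\llha_B$ for each fixed $c$. If that uniformity fails, I would instead state the conclusion with an arbitrary inverse-polynomial slack, which is what the argument directly proves.
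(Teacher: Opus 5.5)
Your steps 1 and 2 follow the paper's route: the paper gives no separate argument and calls the statement a direct corollary of the preceding theorem (Theorem 8 of \cite{jpmc_cr}) together with $\llha_B(\mathcal{D})$. You are right that the device must be polynomial time, since an unbounded device could deterministically output heavy strings with zero entropy. You are also right that $O(n)$ classical advice bits are a special case of $2^B n^{O(1)}$ qubits of quantum advice.

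The gap is step 3. For each fixed $c$, the contrapositive only yields an asymptotic inequality. There is a threshold $n_c$ with $H(Z\mid\vec C)\ge\frac{B}{2}\bigl(\frac{bq-1}{b-1}-\frac{n^{-c}}{b-1}\bigr)$ for all $n\ge n_c$. Even this presumes $\llha_B$ excludes protocols that succeed on infinitely many input lengths; otherwise you only get infinitely many $n$. Since $n_c$ may grow with $c$, you cannot fix $n$ and let $c\to\infty$, so the zero-slack bound $H(Z\mid\vec C)\ge\frac{B}{2}\cdot\frac{bq-1}{b-1}$ is not established.

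Concretely, take a device whose entropy falls short of $\frac{B}{2}\cdot\frac{bq-1}{b-1}$ by $2^{-n}$ at every $n$. It satisfies all of your step-1 inequalities for large $n$, so Theorem 8 never applies to it, for the very reason you gave about super-polynomially small $\epsilon$. Yet it violates your step-3 conclusion at every $n$.

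No limit is needed, because the $n^{-\omega(1)}$ term in the statement encodes exactly ``for every $c$, eventually below $n^{-c}$''. Put $\delta(n):=\max\bigl\{0,\frac{bq-1}{b-1}-\frac{2}{B}H(Z\mid\vec C)\bigr\}$. Steps 1 and 2 give, for every $c$, $\delta(n)\le n^{-c}/(b-1)$ for all $n\ge n_c$. For fixed $b>1$ this says exactly that $\delta(n)=n^{-\omega(1)}$. Explicitly, take increasing thresholds $n_c$ and set $\nu(n):=n^{-c}/(b-1)$ on $n_c\le n<n_{c+1}$; then $\delta\le\nu$ eventually and $\nu$ is negligible. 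Rearranging gives the theorem.

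So your fallback, an arbitrary inverse-polynomial slack, is not a retreat; it is equivalent to the stated bound. The uniformity concern is also unnecessary. For each fixed $c$ the resulting protocol has some fixed polynomial overhead, which $\llha_B$ already excludes, so each $c$ is handled separately. The only non-uniformity is in $n_c$, and that is exactly what forces negligible rather than zero slack.
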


Therefore, assuming that $\llha_{B}(\mathcal{D})$ for distribution $\mathcal{D}$ over quantum circuits acting on $n$ qubits, we can derive entropy from the output string from the algorithm. We can further derive smooth min-entropy lower bounded by the von Neumann entropy by repetition using Entropy Accumulation Theorem (EAT) \cite{Dupuis2020}.

\section{Deferred Proofs}
\label{sec:proof_rej}
\begin{proof}
We will now prove \cref{lemma:rejection_sampling_error}.
Let $\tilde g(x) = g(x)+\eta(x)$ where $\frac{1}{2}\sum_x|\eta(x)|=\epsilon$ such that $SD(g,g+\eta)=\epsilon$.
We compute probability of acceptance first.
\begin{align*}
    \Pr[accept]&= \Pr[x=0]  \cdot \Pr[accept |x=0] + \cdots \tag*{(LoTP)}\\
    &=\sum_x ( g(x)+\eta(x)) \cdot \frac{f(x)}{M \cdot g(x)}\\
    &=\frac{1}{M}\left(1+\sum_x \left(\eta(x)\cdot \frac{f(x)}{g(x)}\right)\right)\\
    &=\frac{1}{M}(1+ \delta) \tag*{(\ensuremath{\delta = \sum_x (\eta(x)\cdot \frac{f(x)}{g(x)})})}\\
    &\leq \frac{1}{M}(1+ |\delta|)\\
    &\leq \frac{1}{M}\left(1+{\sum_x \abs{\eta(x)}\cdot \frac{f(x)}{g(x)}} \right) \tag*{(tri.ineq)}  \\
    &\leq \frac{1}{M}(1+\sum_x \abs{\eta(x)}M)  \tag*{\ensuremath{(\frac{f(x)}{g(x)} \leq M)}} \\
    &\leq \frac{1}{M}+2\epsilon,
\end{align*}
and $\Pr[accept]=\frac{1}{M}(1+ \delta) \geq \frac{1}{M}(1 - |\delta|) \geq \frac{1}{M}(1 - 2M\epsilon)$. Note that $\abs{\delta} \leq 2M\epsilon<1$. Thus,
\begin{align*}
\tilde f (x) &= \Pr[X=x | accept]\\
&= \Pr[X=x]\cdot \Pr[accept|X=x] \cdot 1/\Pr[accept]\\
&= ( g(x)+\eta(x)) \cdot \frac{f(x)}{M \cdot g(x)}\cdot 1/\left(\frac{1}{M}(1+\delta)\right)\\
&= \frac{f(x)}{1+\delta} \cdot \left(1+\frac{\eta(x)}{g(x)}\right).
\end{align*}
Finally, the error is given by
\begin{align*}
    \tilde f(x)-f(x) &= \frac{f(x)}{1+\delta}\left(1+\frac{\eta(x)}{g(x)}\right)-f(x)\\
    &= \frac{f(x)}{1+\delta} \left(\frac{\eta(x)}{g(x)}-\delta\right).
\end{align*}
The statistical difference is then
\begin{align*}
    SD(\tilde f, f) &= \frac{1}{2}\sum_x \abs{ \tilde f(x)- f(x)}\\
    &= \frac{1}{2}\sum_x \abs{\frac{f(x)}{1+\delta} \left(\frac{\eta(x)}{g(x)}-\delta\right)}\\
    &= \frac{1}{2}\sum_x \frac{\abs{f(x)\left(\frac{\eta(x)}{g(x)}-\delta\right)}}{\abs{{1+\delta}}}\\
    &= \frac{1}{2{\abs{{1+\delta}}}} \sum_x \abs{f(x)\left(\frac{\eta(x)}{g(x)}-\delta\right)}\\
    &\leq  \frac{1}{2{\abs{{1+\delta}}}} \left(\sum_x \abs{f(x)\left(\frac{\eta(x)}{g(x)}\right)} + \sum_x \abs{f(x) \delta}\right) \tag*{(tri.ineq)}\\
    &\leq  \frac{1}{2{\abs{{1+\delta}}}} \left(\sum_x{ \left(\abs{\eta(x)}\cdot \frac{f(x)}{g(x)}\right)} + |\delta|\right)\\
    &\leq  \frac{1}{2{\abs{{1+\delta}}}}(2M\epsilon + 2M\epsilon)\\
    &\leq \frac{1}{{\abs{{1+\delta}}}}(2M\epsilon)\\
    &\leq \frac{1}{{\abs{{1-|\delta|}}}}(2M\epsilon) \tag*{(reverse tri.ineq, \ensuremath{\abs{\delta} < 1})}\\
    &\leq \frac{1}{{{1-2M\epsilon}}}(2M\epsilon)
\end{align*}
\end{proof}

\section{Deferred Security Proof}
\label{sec:full_sec_proof}

\begin{proof}
We will prove \cref{theo:security} by providing a simulator $\simulator$ and a sequence of hybrids $\hybrid_0,\cdots, \hybrid_4$ such that for any (classical\footnote{The result should naturally extends to the quantum distinguisher if we use the min-entropy from complexity guarantee stated at \cref{theo:entropy_guarantee_unused}.}) probabilistic polynomial-time distinguisher $\mathbf{D}$:
    $$\pi_V(\idealblockchain||\idealcommchannel^{T\rightarrow P}||\idealcommchannel^{P\rightarrow V}||\idealseed_{V})\approx\idealrandom^{\idealblockchain}_{\mathcal{U}}\sigma_{\{P,E,T\}}.$$

We start by specifying the simulator $\simulator_{\{P,E,T\}}$ given in \cref{fig:simulator_main}. The simulator is a straightforward proxy to the internal interface since we do not make use of any secret in the ideal resource. We give the distinguisher $\mathbf{D}$ access to $\mathcal{O_{\vec C}}$ that operates exactly like the oracle defined in \cref{theo:smooth_min_entropy_bound_guarantee}. Note that the query count $Q_{\vec C}$ is tracked separately for each different $\vec C$ corresponding to a new session. Then, we proceed to show a sequence of indistinguishable hybrids. 
\\

\noindent\textbf{Hybrid $\hybrid_0$}. This hybrid is the same as $\pi_V(\idealblockchain||\idealcommchannel^{T\rightarrow P}||\idealcommchannel^{P\rightarrow V}||\idealseed_{V})$.
\\

\noindent\textbf{Hybrid $\hybrid_1$}. This hybrid is the same as previous one except we abort when there is a collision of circuits provided by in 
$\derive_{Haar(N)^M}$ 
in any of the previous session. The collision probability is negligible. 
\\

\noindent\textbf{Hybrid $\hybrid_2$}. This hybrid is the same as previous one except we abort when the verification $\xebtestbool(\vec C, \vec z)$ passes but the $Q_{\vec C} < Q_{\operatorname{min}}$ where $Q_{\operatorname{min}}$ is defined as in \cref{theo:smooth_min_entropy_bound_guarantee}. This hybrid transition make use of the fact that adversary cannot spoof more than specified or else it fails with overwhelming probability. Since the protocol matches that of \cref{theo:smooth_min_entropy_bound_guarantee}, we have that the abort condition is negligible. That is the verification passes with at most negligible probability $\Pr[\Omega] = 4\epsilon_s$ when $Q_{\vec C} < Q_{\operatorname{min}}$ and $\epsilon_s$ is negligible. 
\\

\noindent\textbf{Hybrid $\hybrid_3$}. This hybrid is the same as $\hybrid_2$ except that instead of sampling circuit $u$ using extractor on the response recorded on-chain with the second extractor input provided by $\idealseed_V$, we sample $\widehat u$ uniformly from $\mathcal{U}$. Since $u$ is the output from randomness extractor defined in \cref{def:randomness_extraction} and assume that $\vec z$ fulfills $H_{\operatorname{min}}^{\epsilon}$ and $seed_q$ fulfills $H_{\operatorname{min}}$ requirement, we have that the distinguishing advantage between the state where $u$ is returned and the state where $\widehat u$ is returned is upper bounded by the trace distance \footnote{Trace distance $T(\rho,\sigma):=\frac{1}{2}\norm{\rho-\sigma}_1$ quantifies the distinguishing advantage between two quantum states $\rho$ and $\sigma$.} $\frac{1}{2} (6\epsilon_s+2\epsilon_{ext}+2\epsilon_2)$ which is negligible.
Since extractor second input trivially satisfies the min-entropy requirement $\ell_{seed} > \kappa_1$, we now show that the smooth min-entropy requirement for $\vec z$ is satisfied. We consider the case that the distinguisher $\mathcal{D}$ has the exact computational capacity specified in \cref{sec:crypto_assump} and has access to the quantum circuit query oracle $\mathcal{O}_{\vec C}$.

Assuming $\idealblockchain$ bulletin board entropy $\bcentropybound$ is sufficiently high, the block hash $\hash(\widehat B || \aux)$ contains enough entropy and the distinguisher cannot guess it in advance except querying it at the latest block where it is first published. Since the distinguisher do not have enough time to completely spoof the output sample against pseudorandomly but freshly generated challenge circuit, it is forced to returns valid quantum sample in some of the output string $Z = \vec z$. 

Let $\blockheight_{\vec C}$ be the block where challenge circuit is derived from and $\blockheight_{J}$ be the block where the response can be submitted to just in time for any $Q$-query bounded algorithm that can only spoof at most $M-Q_{\vec C}$ samples that still passes the $\xebtestbool$ when rest of $Q_{\vec C}$ samples come from ideal quantum circuit evaluation (response from $\mathcal{O}_{\vec C}$). From the block timing guarantee defined in \cref{sec:idealblockchain_formal} and the aforementioned conditions, by ensuring $\idealblockchain$ time bound $\timebound \leq \timeboundtheo$, we have that 
$$ \mathbf{B}[\blockheight_{J}].\tau - \mathbf{B}[\blockheight_{\vec C}].\tau \leq \tau_{res}.$$

Since there is at least $Q_{\operatorname{min}}$ quantum samples because of the abort condition in the previous hybrid and considering the scenario described matches exactly that of assumption made in \cref{theo:smooth_min_entropy_bound_guarantee}, then using \cref{assum:rcs_unstructure} and \cref{theo:smooth_min_entropy_bound_guarantee}, it can be concluded that 
$$H^{\epsilon_s}_{\operatorname{min}}(Z | \tilde I) \geq Q_{\operatorname{min}}(n-1) + \log \epsilon_s.$$
Assume that $Q_{\operatorname{min}}(n-1) + \log \epsilon_s \geq \kappa_2 + \log_2(1/\epsilon_2)$
, we conclude that the change is indistinguishable.
\\

\noindent\textbf{Hybrid $\hybrid_4$}. This hybrid is the same as $\hybrid_3$ except we re-label the communication channel and the seed to the symbol used in the ideal resource with simulator $\idealrandom^{\idealblockchain}_{\mathcal{U}}\sigma_{P\cup E}$. 
Therefore, $\hybrid_4\equiv\hybrid_3$.
\\

At this point with $\hybrid_4$, conditioned on non-aborting events, it can be seen that the distribution of transcript when interacting with $\hybrid_4$ is identically distributed as interacting with $\idealrandom^{\idealblockchain}_{\mathcal{U}}\sigma_{P\cup E}$. 

\end{proof}

\begin{figure}[h!]
    \centering
      \begin{titledbox}{Simulator $\sigma_{\{P,E,T\}}^{}$}
    \begin{pchstack}
    \begin{pcvstack}
    \pcvspace
    \procedure[]{Emulating $E(\cmdreceive)$:}{
         \ret \call E(\cmdreceive)
    }
          \pcvspace
    \procedure[]{Emulating $ P(\cmdread)$: \codecomment{$P \in \mathcal{C}$}}{
        \ret \call P(\cmdread)
    }
      \pcvspace
       \procedure[]{Emulating $P(\cmdresponse, \mathbf{R'})$: \codecomment{$P \in \mathcal{C}$}}{
        \ret \call P(\cmdsend, \mathbf{R'})
      }
    \end{pcvstack}
      \pchspace
      \begin{pcvstack}
        \procedure[]{Emulating $T(\cmdchallenge, \mathbf{C}')$: \codecomment{$T \in \mathcal{C}$}}{
            \ret \call T(\cmdsend, \mathbf{C'}) 
        }
       \pcvspace
        \procedure[]{Emulating $T(\cmdsend,\mathbf{C'} )$:  \codecomment{$T \in \mathcal{C}$}}{
            \ret \call T(\cmdsend, \mathbf{C'})
        }
        \pcvspace
        \procedure[]{Emulating $P(\cmdsend, \mathbf{R'})$:  \codecomment{$P \in \mathcal{C}$}}{
            \ret \call P(\cmdsend, \mathbf{R'})
        }
      \end{pcvstack}
    \end{pchstack}
    \end{titledbox}
    \caption{Simulator $\sigma_{\{P,E,T\}}^{}$ for $\vrs$. 
    }
    \label{fig:simulator_main}
\end{figure}

\end{document}